\newcommand{\DAF}{\textsf{DAF}\xspace}
\newcommand{\VEQ}{\textsf{VEQ}\xspace}
\newcommand{\VC}{\textsf{VC}\xspace}
\newcommand{\GQL}{\textsf{GQL}\xspace}
\newcommand{\Alley}{\textsf{Alley}\xspace}
\newcommand{\AlleyPlus}{\textsf{Alley+}\xspace}
\newcommand{\AlleyTPI}{\textsf{Alley+TPI}\xspace}
\newcommand{\NeurSC}{\textsf{NeurSC}\xspace}
\newcommand{\LSS}{\textsf{LSS}\xspace}
\newcommand{\Fastest}{\texttt{FaST\textsubscript{est}}\xspace}
\newcommand{\Yeast}{\textsf{Yeast}\xspace}
\newcommand{\WordNet}{\textsf{WordNet}\xspace}
\newcommand{\Dblp}{\textsf{DBLP}\xspace}
\newcommand{\Youtube}{\textsf{Youtube}\xspace}
\newcommand{\Patents}{\textsf{Patents}\xspace}
\newcommand{\Yago}{\textsf{YAGO}\xspace}
\newcommand{\Prob}[1]{\mathbb{P}\left[#1\right]}
\newcommand{\Expectation}[1]{\mathbb{E}\left[#1\right]}
\DeclarePairedDelimiterX\set[1]\lbrace\rbrace{#1}
\DeclarePairedDelimiterX{\pair}[2]{\langle}{\rangle}{#1, #2}
\newtheoremstyle{MyStyle}%
  {2.5pt} %
  {2.5pt} %
  {} %
  {} %
  {\bfseries} %
  {.} %
  {.5em} %
  {} %
\setlist{leftmargin=4mm}
\DeclarePairedDelimiter\abs{\lvert}{\rvert}%
\renewcommand*{\paragraph}[1]{ \stepcounter{paragraph}\noindent \underline{\textbf{#1}}. }
\theoremstyle{MyStyle}
\newtheorem{theorem}{Theorem}[section]
\newtheorem{theorem*}{Theorem}
\newtheorem*{definition*}{Definition}
\newtheorem{definition}[theorem]{Definition}
\newtheorem{lemma}[theorem]{Lemma}
\newtheorem{example}[theorem]{Example}
\newtheorem*{prf*}{Proof}
\renewenvironment{proof}{\noindent \textbf{Proof.}\ }{\hfill$\square$\vspace{0.3em}}
\newcommand\vldbpagestyle{plain} 
\begin{document}
\raggedbottom

\author{Wonseok Shin}
\affiliation{%
  \institution{Seoul National University}
}
\email{wsshin@theory.snu.ac.kr}

\author{Siwoo Song}
\affiliation{%
  \institution{Seoul National University}
}
\email{swsong@theory.snu.ac.kr}

\author{Kunsoo Park}
\affiliation{%
  \institution{Seoul National University}
}
\authornote{Corresponding author}

\email{kpark@theory.snu.ac.kr}

\author{Wook-Shin Han}
\affiliation{%
  \institution{POSTECH}
}
\email{wshan@dblab.postech.ac.kr}
\title{Cardinality Estimation of Subgraph Matching:\\A Filtering-Sampling Approach}

\def\vldbtitle{Cardinality Estimation of Subgraph Matching: A Filtering-Sampling Approach}

\pagestyle{\vldbpagestyle}

\clearpage
\renewcommand{\thetable}{\arabic{table}}
\renewcommand{\thefigure}{\arabic{figure}}
\setcounter{section}{0}
\setcounter{figure}{0}
\setcounter{page}{1}
\setcounter{table}{0}
\begin{abstract}
Subgraph counting is a fundamental problem in understanding and analyzing graph structured data, yet computationally challenging. This calls for an accurate and efficient algorithm for Subgraph Cardinality Estimation, which is to estimate the number of all isomorphic embeddings of a query graph in a data graph.
We present \Fastest, a novel algorithm that combines 
(1) a powerful filtering technique to significantly reduce the sample space, 
(2) an adaptive tree sampling algorithm for accurate and efficient estimation, and 
(3) a worst-case optimal stratified graph sampling algorithm for hard instances.
Extensive experiments on real-world datasets show that \Fastest outperforms state-of-the-art sampling-based methods by up to two orders of magnitude and GNN-based methods by up to three orders of magnitude in terms of accuracy.

\end{abstract}

\maketitle

\section{Introduction}
Subgraph matching is the fundamental problem in understanding and analyzing graph structured data \cite{FundementalProblems}. Given a data graph and a query graph, subgraph matching is the problem of finding all isomorphic embeddings of the query graph in the data graph. Identifying the occurrences of specific subgraph patterns is crucial for various applications such as analyzing protein-protein interaction networks~\cite{SING, Bonnici2017}, revealing patterns and trends of user interactions in social networks~\cite{Fan2012}, and optimizing queries in relational databases~\cite{TurboHompp}.
Subgraph counting, the problem of counting the number of embeddings, is also of paramount importance in various applications, such as designing graph kernels~\cite{GraphletKernels, CycleKernel} and understanding biological networks~\cite{Bio1, Bio2}. 

\begin{figure}[t]
    \centering
    \includegraphics[width=\linewidth]{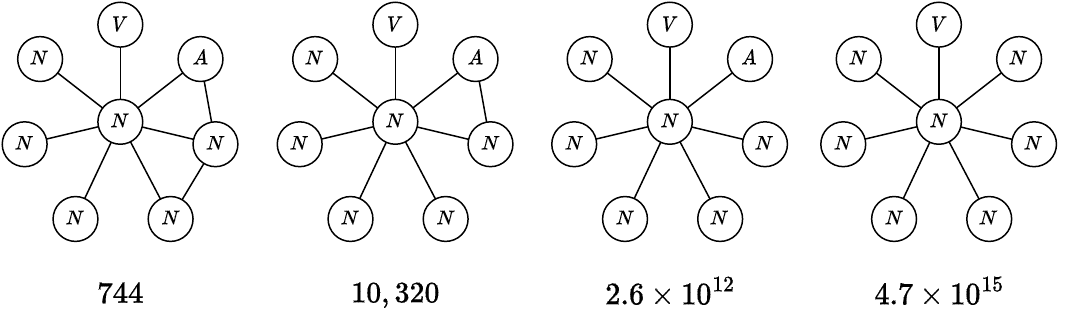}
    \caption{Query graphs and their number of embeddings in \WordNet dataset ($\approx$ 80K vertices). N, V, and A represent Noun, Verb, and Adjective, respectively.}
    \label{fig:Fig01}
\end{figure}
Both subgraph matching and subgraph counting are computationally challenging. Specifically, subgraph matching is NP-hard \cite{DAF}. Subgraph counting is \#P-hard, since subgraph counting in planar graphs (subgraph counting restricted to both data graph and query graph being planar) is \#P-hard \cite{PlanarCounting}.
This challenge is further amplified by the exponential number of possible embeddings.
For instance, \Cref{fig:Fig01} shows some query graphs with 8 vertices and the number of embeddings for each query graph in the \WordNet graph \cite{WordNet}. The number of embeddings varies from $744$ to $4.7\times 10^{15}$ with only small changes in the query graphs.
Consequently, numerous subgraph matching algorithms \cite{TurboIso, GraphQL, CFLMatch, DAF, VEQ}, which are primarily implemented for counting the number of embeddings rather than finding the actual embeddings, typically count at most $k$ (usually $10^3$ to $10^5$) embeddings, instead of counting all embeddings.
As a result, there is a pressing need for an accurate and efficient algorithm for \textit{Subgraph Cardinality Estimation}, which is to estimate the number of all isomorphic embeddings.

\paragraph{Existing Approaches and Limitations}
Subgraph cardinality estimation has been extensively researched due to its diverse real-world applications. In particular, the counting of small subgraphs (also known as graphlets or motifs) has been widely studied. IMPR \cite{IMPR} proposed a random walk-based method for counting graphlets, while MOTIVO \cite{MOTIVO} developed an adaptive graphlet sampling algorithm utilizing the color-coding technique. However, such works primarily focus on small subgraphs.

Recently, utilizing Graph Neural Networks (GNN) for subgraph cardinality estimation is gaining interest. LSS \cite{LSS} and NeurSC \cite{NeurSC} have explored GNNs for subgraph cardinality estimation. However, their accuracy for difficult instances with large queries remains unsatisfactory.

Approximate counting of homomorphic embeddings and estimating join cardinalities are also closely related problems to subgraph cardinality estimation.
Existing works in these areas mainly utilize summarization or sampling-based approaches. A comprehensive survey \cite{GCare} demonstrated that WanderJoin \cite{WanderJoin}, which employs random walks for sampling, outperforms other summarization and sampling methods for counting homomorphic embeddings. Alley \cite{Alley} extends WanderJoin by proposing random walk with intersection to improve accuracy. Despite the success of such methods, they often encounter sampling failure, particularly for instances with complex label distributions and large queries \cite{LSS}.

\paragraph{Contributions}
In this paper, we present a new algorithm \Fastest (\underline{\texttt{F}}iltering \underline{a}nd \underline{\texttt{S}}ampling \underline{\texttt{T}}echniques for subgraph cardinality \allowbreak \underline{\texttt{est}}-\allowbreak imation), addressing the limitations of existing methods.

We propose a novel \textit{Filtering-Sampling} approach for cardinality estimation as follows. To approximate the cardinality of a set whose elements are difficult to enumerate or count, sampling is a widely used approach. For a query graph $q$ and a data graph $G$, let $\mathcal{M}$ denote the set of all isomorphic embeddings of $q$ in $G$. A sampling algorithm first defines a sample space $\Omega$ which is a superset of $\mathcal{M}$, and aims to approximate the ratio $\rho = \abs{\mathcal{M}} / \abs{\Omega}$.
Assume that (1) obtaining the cardinality of $\Omega$ is easier than that of $\mathcal{M}$, (2) $\Omega$ is amenable to random sampling, and (3) verification of whether a random sample $x \in \Omega$ is in $\mathcal{M}$ can be done efficiently. 
By obtaining uniform random samples from $\Omega$ and using the empirical ratio $\hat{\rho}$ which is the proportion of random samples lying in $\mathcal{M}$ as an estimator to $\rho$, the estimate $\abs{\Omega}\hat{\rho}$ is an unbiased consistent estimator \cite{StatisticalInference} for $\abs{\mathcal{M}}$. 

Our approach reduces the size of sample space $\Omega$ greatly by using filtering, while retaining all embeddings (i.e., keeping $\abs{\mathcal{M}}$ unchanged). That is, we increase the ratio $\rho = \abs{\mathcal{M}}/\abs{\Omega}$ by decreasing $\abs{\Omega}$, which makes sampling more accurate and more efficient. (Previous approaches take samples from the whole sample space $\Omega$, and thus the ratio $\rho$ is very small in many cases, thereby causing sampling failures.)

\begin{itemize}
    \item We build an auxiliary data structure \textit{Candidate Space} (CS) which contains candidates for vertex mapping and edge mapping. By employing novel safety conditions---Triangle Safety, Four-Cycle Safety, and Edge Bipartite Safety---and the refinement order called Promising First Candidate Filtering, we reduce the candidate edges by up to 80\% compared to the filterings of the state-of-the-art subgraph matching algorithms \cite{VEQ, VCSubgraphMatching}, resulting in a compact CS.
    \item We develop a \textit{candidate tree sampling} algorithm, which performs uniform sampling of spanning trees of the query graph in the compact CS. 
    Our tree sampling employs an adaptive strategy to determine the sample size using the Clopper-Pearson confidence interval, thereby achieving efficient and accurate estimation with rigorous probabilistic guarantees.
    \item We devise a \textit{stratified graph sampling} algorithm that obtains samples from diverse regions of the sample space, achieving outstanding accuracy on hard instances with a small sample size.
    Our graph sampling algorithm attains a worst-case optimal time complexity of $O(AGM(q))$ for the query graph $q$, which is a tight upper bound for $\abs{\mathcal{M}}$ proposed by Atserias, Grohe and Marx \cite{AGMBound, SSTE}, guaranteeing the same time complexity even when the sampling ratio is 100\%.
\end{itemize}
We demonstrate that \Fastest shows significant improvements in both accuracy and efficiency for subgraph cardinality estimation through extensive experiments on well-known real-world datasets. Specifically, \Fastest outperforms state-of-the-art sampling-based methods by up to two orders of magnitude and GNN-based methods by up to three orders of magnitude in terms of accuracy.

\paragraph{Organization} The rest of the paper is organized as follows. \Cref{sec:Prelim} introduces the definitions and the problem statement. \Cref{sec:Overview} outlines an overview of \Fastest. \Cref{sec:Filtering} discusses the candidate filtering algorithm. \Cref{sec:sampling} describes the candidate tree sampling algorithm. \Cref{subsec:CGSampling} explains the stratified graph sampling algorithm. \Cref{sec:experiments} presents the experimental results, and \Cref{sec:conclusion} concludes the paper. 
Proofs are provided in the appendix.

\section{Preliminaries}
\label{sec:Prelim}
\subsection{Problem Statement}

\begin{figure}[t]
    \includegraphics[width=0.85\linewidth]{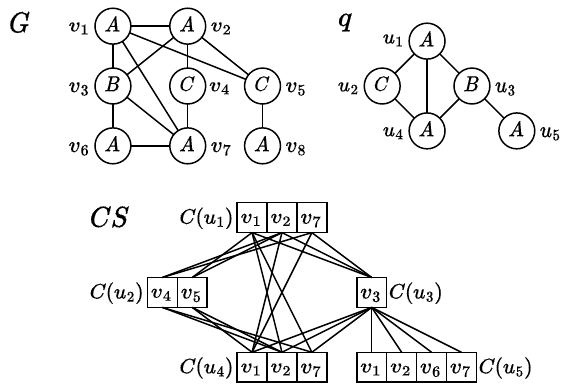}
    \caption{Data graph $G$, query graph $q$, and Candidate Space}
    \label{fig:initCS}
\end{figure}

In this paper, we mainly consider undirected and connected graphs with vertices labeled, while the techniques we propose can be extended to directed, disconnected, and edge-labeled graphs. We define a graph $g = (V_g, E_g, L_g)$ as a triplet of a set of vertices $V_g$, a set of edges $E_g$, and a labeling function $L_g : V_g \to \Sigma_V$ where $\Sigma_V$ is a set of possible labels.
Given query graph $q = (V_q, E_q, L_q)$ and data graph $G = (V_G, E_G, L_G)$, a vertex-mapping function $M : V_q \to V_G$ is called an \textit{(isomorphic) embedding} when three conditions are met: 
(1) $M$ is injective, i.e., $M(u) \neq M(u')$ for $u \neq u' \in V_q$. (2) $M$ preserves all adjacency relationships, i.e., $(M(u), M(u')) \in E_G$ for every $(u, u') \in E_q$, (3) $M$ preserves all vertex labels, i.e., $L_G(M(u)) = L_q(u)$ for every $u \in V_q$. 
We say that $q$ is \textit{subgraph isomorphic} to $G$ when such an embedding exists. 
For example, in \Cref{fig:initCS}, $\set{(u_1, v_1), (u_2, v_5), (u_3, v_3), (u_4, v_2), (u_5, v_6)}$ is an isomorphic embedding of $q$ in $G$, as it is injective and it preserves all vertex labels and adjacency relationships.
When a vertex-mapping function satisfies only (2) and (3), we call it a \textit{homomorphic embedding}.

\noindent\autoref{tab:notations} lists the notations that are frequently used in the paper.

\begin{table}[b]
    \caption{Frequently Used Notations}\label{tab:notations}
    \centering
    \resizebox{1.0\linewidth}{!}{
    \begin{tabular}{ll}
    \toprule
    Notation & Definition \\\midrule
    $q, G$ & Query graph and data graph \\ 
    $V_g, E_g, L_g$ & Vertices, edges, and labels of a graph $g$ \\ 
    $\mathcal{M}$ & Set of isomorphic embeddings of $q$ in $G$ \\ 
    
    $d_g(x)$ & Degree of a node $x$ in a graph $g$ \\ 
    $N_g(x)$ & Neighbors of $x$ in a graph $g$\\
    $d_g(x, l)$ & Number of vertices in $N_g(x)$ with label $l$ in a graph $g$\\ 
    
    $\delta_g$, $\Delta_g$ & Degeneracy \cite{Degeneracy} and maximum degree of a graph $g$ \\

    $T_q$ & A spanning tree of $q$ \\ 
    $C(u)$ & Set of candidate vertices for $u$ \\
    $C(u'\mid u,v)$ & Set of candidate neighbors of $(u, v)$ to $u'$ \\ 
    $E_{CS}(u, u')$ & Set of candidate edges for $(u, u')$ \\ 
    \bottomrule
    \end{tabular}
    }
\end{table}

\paragraph{Subgraph Matching and Counting} For query graph $q$ and data graph $G$, let $\mathcal{M}$ denote the set of all \textit{embeddings} of $q$ in $G$. The \textit{subgraph matching} problem is to find $\mathcal{M}$ exactly, while the \textit{subgraph counting} problem is to find $\abs{\mathcal{M}}$. It is clear that all subgraph matching algorithms can also be used for counting.

\paragraph{Problem Statement} Given a query graph $q$ and a data graph $G$, the \textit{subgraph cardinality estimation} problem is to approximate $\abs{\mathcal{M}}$, the cardinality of the set of all isomorphic embeddings of $q$ in $G$.

\subsection{Related Works}
\paragraph{Subgraph Matching and Counting}
In the recent decade, numerous solutions have been proposed for both exact and approximate subgraph matching and subgraph counting.

A wealth of results for subgraph matching \cite{TurboIso, DAF, VEQ, VF2, CFLMatch, VCSubgraphMatching, RISubgraphMatching, CECI} has been established based on Ullmann's backtracking framework \cite{Ullmann1976}. Recent works such as \cite{CFLMatch, DAF, VEQ, VCSubgraphMatching} share the filtering-backtracking approach. 
These works first build an auxiliary data structure such as Compact Path Index (CPI) \cite{CFLMatch}, Bigraph Index (BI) \cite{VCSubgraphMatching} and Candidate Space (CS) \cite{DAF, VEQ}, utilizing diverse filtering strategies such as extended DAG-graph DP using Neighbor Safety \cite{VEQ}. 
These algorithms utilizes such data structures to reduce search space for backtracking, which is further accelerated by strategic selection of matching orders such as Candidate-size order \cite{DAF}.
Interested readers can refer to a detailed survey of the aforementioned algorithms~\cite{RapidExperiment}.

Approximate solutions are split into three sub-classes: summariza\allowbreak tion-based, sampling-based, and machine learning-based approaches. 
Summarization methods, such as SumRDF \cite{SumRDF} and Characteristic Sets \cite{CSET}, construct a summary data structure of the data graph. Estimates for query graph $q$ are then processed efficiently by querying the summary structure, typically by decomposing $q$ into smaller substructures such as stars and aggregating the results computed for each substructure. However, these methods may produce highly inaccurate results \cite{GCare}.

Sampling algorithms have also been widely employed for subgraph counting \cite{DPColorPath, TETRIS, TuranShadow}.
WanderJoin \cite{WanderJoin} performs random walks over relations and uses Horvitz-Thompson estimation, with the calculated weights of each walk. 
JSub \cite{JSub} obtains independent uniform samples from join results by computing upper bounds for intermediate join sizes.
Alley \cite{Alley} proposes the random walk with intersection and it also uses Tangled Pattern Index (TPI), which is a summary structure obtained by mining difficult patterns.

Machine learning-based solutions, particularly those utilizing Graph Neural Networks (GNN), have recently begun to emerge. Works like LSS \cite{LSS} and NeurSC \cite{NeurSC} have explored the potential of GNNs for subgraph cardinality estimation. While machine learning enables highly efficient query processing, these solutions does not provide an unbiased estimation.

\paragraph{Counting Special Classes of Subgraphs} Extensive research has been conducted on subgraph cardinality estimation of special classes. TETRIS \cite{TETRIS} develops a sublinear approximate algorithm for counting triangles under the random walk access model. 
TuranShadow \cite{TuranShadow} is a randomized sampling algorithm for estimating clique counts based on Turan's theorem. 
DPColorPath \cite{DPColorPath} suggests a uniform sampling algorithm combined with greedy coloring, which also estimates the number of cliques.

\paragraph{AGM Bound and Worst-Case Optimality}
The AGM bound~\cite{AGMBound} is a tight upper bound for the number of embeddings of a query graph $q$ in a data graph $G$.
An algorithm for subgraph matching or cardinality estimation is referred to as \textit{worst-case optimal} if it guarantees $O(AGM(q))$ time complexity~\cite{WCOSubgraphQueryProcessing, Alley}.
SSTE \cite{SSTE} developed an edge sampling algorithm for estimating arbitrary-sized subgraph counts in $\tilde{O}(AGM(q) / OUT(q))$ time, where $OUT(q)$ refers to the number of embeddings. While it provides strong theoretical guarantees, its performance on real-world graphs is not impressive~\cite{GJSampler}.

\section{Overview of Algorithm}
\label{sec:Overview}
\setlength{\textfloatsep}{0pt}
\begin{algorithm}[t]
    \caption{\textsc{Overview of} \Fastest}
    \label{alg:overview}
    \rm\sffamily
    \KwIn{A data graph $G$, a query graph $q$}
    \KwOut{Estimated number of embeddings of $q$ in $G$}
    
     CS $\gets$ BuildCompactCandidateSpace($G, q$)\;

    (estimate, \#successes) $\gets$ CandidateTreeSampling(CS, $q$)\; 

    \lIf{\rm \sffamily \#successes is large enough}{\Return{\rm\sffamily estimate}}
    \lElse{\Return{\rm \textsf{CandidateGraphSampling}(\textsf{CS}, \(q\), \#successes)}}
\end{algorithm}
\setlength{\textfloatsep}{3pt}

\paragraph{Filtering-Sampling Approach} Since the problem of deciding whether $q$ is subgraph isomorphic to $G$ is NP-hard, exact enumeration or counting of embeddings is often computationally infeasible. 
To approximate the cardinality of a set $\mathcal{M}$ whose elements are difficult to enumerate or count, sampling is a widely used approach. A sampling algorithm first defines a sample space $\Omega$ which is a superset of $\mathcal{M}$, and aims to approximate the ratio $\rho = \abs{\mathcal{M}} / \abs{\Omega}$.

Our approach reduces the size of sample space $\Omega$ greatly, while retaining all embeddings. We develop a filtering algorithm to reduce the sample space by removing unnecessary vertices and edges in finding all embeddings.

\paragraph{Filtering} 
We build an auxiliary data structure \textit{Candidate Space} (CS) containing candidates for vertices and edges. To build a compact CS, we perform iterative refinements for each vertex $u \in V_q$. 
The refinement step involves checking a set of necessary safety conditions that candidates must satisfy, which enables us to remove invalid candidates.
\begin{itemize}
    \item \textit{Generic Framework}. We propose the generalized framework for candidate filtering, where different conditions and refinement strategies can be implemented. (\Cref{subsec:filteringframework})
    \item \textit{Safety Conditions}. We propose novel safety conditions: \textit{Triangle Safety}, \textit{Four-Cycle Safety}, and \textit{Edge Bipartite Safety}. These conditions effectively reduce the CS by better utilizing local substructures. (\Cref{subsec:safetyconditions})
    \item \textit{Refinement Order}. We suggest \textit{Promising First Candidate Filtering} as the refinement order, where in each iteration a candidate set most likely to be reduced is chosen to be refined. (\Cref{subsec:PFCF})
    
\end{itemize}
The combined filtering algorithm we develop is much stronger than those in \cite{DAF, VEQ, VCSubgraphMatching} in terms of filtering power, resulting in a compact CS.

\paragraph{Sampling} We develop a sampling algorithm utilizing the compact CS for accurate and efficient subgraph cardinality estimation.
\begin{itemize}
    \item \textit{Candidate Tree Sampling}. We choose a spanning tree $T_q$ of $q$, and define the sample space $\Omega$ as the set of homomorphisms (called candidate trees) of $T_q$ in the compact CS. 
    We count the number of candidate trees and sample them uniformly at random. ~(\Cref{sec:sampling})
    \item \textit{Stratified Graph Sampling}. We develop a graph sampling algorithm to handle hard cases in which obtaining accurate results via tree sampling is not possible in a reasonable time. 
    We propose a novel sampling method inspired by stratification to sample from diverse regions of the sample space, achieving worst-case optimality. (\Cref{subsec:CGSampling}) 
\end{itemize}
\noindent The outline of our algorithm \Fastest is shown in \Cref{alg:overview}.

\section{Candidate Filtering}
\label{sec:Filtering}

Candidate Space (CS) is an auxiliary data structure built for efficient subgraph matching, proposed by DAF \cite{DAF} and extended by VEQ \cite{VEQ}. 
Here we extend the definition of CS by explicitly maintaining candidate edges, which is necessary for better filtering.
\subsection{Candidate Space}
\begin{definition}[Candidate Space]
    For a vertex $u \in V_q$ and its neighbor $u' \in N_q(u)$, we define the following terms.
    \begin{itemize}
        \item \textit{Candidate set} $C(u)$: a subset of $V_G$ such that if there exists an embedding $M$ of $q$ in $G$ that maps $u$ to $v$, $v$ must be included in $C(u)$.
        \item \textit{Candidate edges} $E_{CS}(u, u')$: a subset of $E_G$ such that if there exists an embedding $M$ of $q$ in $G$ that maps $u$ to $v$ and $u'$ to $v'$, $(v, v')$ must be included in $E_{CS}(u, u')$.
    \end{itemize} 
    The \textit{Candidate Space} is a collection of candidate sets and candidate edges for query graph $q$ and data graph $G$.
\end{definition}

By explicitly maintaining candidate edges, we can remove an edge $(v,v')$ from candidate edges if it does not belong to any embeddings, even though vertices $v$ and $v'$ remain in the Candidate Space.
For $v \in C(u)$ and $u' \in N_q(u)$, we define the \textit{candidate neighbors} of $(u, v)$ to $u'$ as the vertex candidates of $u'$ when $u$ is mapped to $v$, i.e., $C(u' \mid u, v) = \set{v' \mid (v, v') \in E_{CS}(u, u')}$.

We build the initial CS as follows. For each vertex $u$, initial candidate set $C_{init}(u)$ is defined as the set of vertices $v \in V_G$ with the following two conditions: (1) $v$ has the same label as $u$; (2) for each label $l \in \Sigma$, $d_G(v, l) \geq d_q(u, l)$.
We initialize candidate edges as $E_{CS}(u, u') = \set{(v, v') \in E_G \mid v \in C(u) \text{ and}\  v' \in C(u')}$. The initial candidate space can be constructed in $O(\abs{E_q} \abs{E_G})$ time.

\begin{example}
    \Cref{fig:initCS} depicts an example of the initial CS given query graph $q$ and data graph $G$. 
    Among the vertex pairs with matching labels, $v_6$ is not in $C(u_1)$ since $d_G(v_6, C) < d_q(u_1, C)$. Other absences can also be seen analogously. 
    In the CS, $C(u_4 \mid u_1, v_1)$, the candidate neighbors of $(u_1, v_1)$ to $u_4$, is $\set{v_2, v_7}$ as they are neighbors of $v_1$ which are in $C(u_4)$.
\end{example}

By definition, all embeddings of $q$ in $G$ are preserved in CS, as all possible vertices and edges constituting an embedding are present. We call this property of CS as \textit{complete}.
In our sampling algorithm later to be described, reducing the size of a candidate space translates to a smaller sample space. Therefore, the key is to refine the candidate space as much as possible while retaining completeness.
Note that CS can contain $O(\abs{V_q} \abs{V_G})$ vertices and $O(\abs{E_q}\abs{E_G})$ edges in the worst case.

\label{subsec:filteringframework}
\begin{algorithm}[t]
    \caption{\textsc{Candidate Space Refinement}}
    \rm\sffamily
    \label{alg:Refine}
    \SetKwProg{Fn}{Function}{:}{}
    \While{\rm\sffamily refinement is not finished} {
        $u$ $\gets$ a vertex from query graph\;
        \label{line:choosevertex}
        \ForEach{$v \in C(u)$}{
            \If{\rm\sffamily VertexSafety$(u, v)$ violated} {
                Remove $v$ from $C(u)$\;
                continue\;
            }
            \ForEach{$u' \in N_q(u)$} {
                \ForEach{$v' \in C(u' \mid u, v)$} {
                    \If{\rm\sffamily EdgeSafety$((u, u'), (v, v'))$ violated} {
                        Remove $v'$ from $C(u' \mid u, v)$\;
                    }
                }
            }
            \If{\rm\sffamily $C(u' \mid u, v)$ is empty for any $u'\in N_q(u)$} {
            \label{line:EmptyCandidateNeighbor}
                Remove $v$ from $C(u)$\;
            }
        }
    }
\end{algorithm}
\paragraph{Generic Filtering Framework}
We start the refinement from the initial CS. For a predefined set of safety conditions, the generic filtering framework can be defined as \Cref{alg:Refine}.
The framework consists of several design choices; a set of \textit{safety conditions} to determine validity of candidate vertices or edges, a method to decide the \textit{refinement order}, i.e., the order of query vertices whose candidate set is refined, and the \textit{stopping criteria} by which we finish the refinement. 

\subsection{Safety Conditions}
\label{subsec:safetyconditions}
It is necessary that $C(u'\mid u, v)$ is nonempty for $v$ to be a candidate for $u$ for each neighbor $u'$ of $u$. Together with this condition, \VEQ~\cite{VEQ} suggested the use of \textit{safety condition} that can further filter invalid candidates. For any necessary condition $h(u, v)$ that must be satisfied whenever $M(u) = v$ for some embedding $M$, in any step, we can remove candidate vertex $v$ from $C(u)$ when $h(u, v)$ is false. 
\VEQ considers \textit{Neighbor Safety} for $h$, that can filter candidate vertices if it lacks sufficient candidate neighbors for some label. 
\GQL \cite{GraphQL} and \VC \cite{VCSubgraphMatching} employ a stronger condition using bipartite matching.

As we explicitly maintain candidate edges, any necessary condition $g((u, u'), (v, v'))$ for edges which must be satisfied whenever $M(u) = v$ and $M(u') = v'$ for an embedding $M$ can be used to remove invalid candidate edges. 

In this paper, we propose novel safety conditions for stronger filtering. 
We propose the Triangle Safety and Four-Cycle Safety which consider cyclic substructures. 
We extend the ESIC condition suggested by \VC~\cite{VCSubgraphMatching} to Edge Bipartite Safety, increasing the filtering power while maintaining the same time complexity.

\paragraph{Triangle Safety}
Filtering methods in \cite{DAF, VEQ, CFLMatch, VCSubgraphMatching} did not utilize cyclic substructures, i.e., triangles and 4-cycles in the query graph.
While finding all cycles is itself an NP-hard problem, using small cycles can greatly increase the filtering power without taking excessive time.

\begin{definition}
\label{def:localtriangles}
    For an edge $(a, b) \in E_g$, we define $L^3_g(a, b)$ as the set of vertices $c \in V_g$ such that $(a, b), (b, c), (c, a) \in E_g$, i.e., they form a triangle with the given edge $(a, b)$.
\end{definition}
\noindent We define the condition \textit{Triangle Safety} as follows. 
\begin{restatable}{definition}{DefTriangleSafety}
\label{def:trianglesafety}
A data edge $(v, v')$ for a query edge $(u, u')$ is \textit{Triangle Safe} when 
\begin{itemize}
\item $\abs{L^3_q(u, u')} \leq \abs{L^3_G(v, v')}$, and
\item for each $u^* \in L^3_q(u, u')$,
there exists $v^* \in L^3_G(v, v')$ consisting of valid vertices and edges, i.e., for each $u^*$, 
there exists $v^* \in C(u^*\mid u,v) \cap C(u^*\mid u',v')$. 
\end{itemize}
\end{restatable}

\begin{example}
    \Cref{fig:triangle-safety} shows an example of refinement by Triangle Safety. Considering the query edge $(u_2, u_4)$ and data edge $(v_4, v_2)$, Triangle Safety evaluates edge safety conditions where a query triangle $(u_2, u_4, u_1)$ can be matched to a valid triangle in CS. Since $C(u_1 \mid u_2, v_4) \cap C(u_1 \mid u_4, v_2) $ is empty, edge $(v_4, v_2)$ is removed from candidate edges for $(u_2, u_4)$ by \Cref{def:trianglesafety}. As $(v_4, v_7)$ for $(u_2, u_4)$ is also removed by Triangle Safety, $C(u_4 \mid u_2, v_4)$ becomes empty and therefore $v_4$ can be removed from $C(u_2)$.
\end{example}

\paragraph{Four-Cycle Safety} 
A similar work can be done with 4-cycles in the query graph and the data graph.

\begin{definition}
\label{def:local4cycle}
    For an edge $(a, b) \in E_g$, we define $L^4_g(a, b)$ as the set of edges $(c,d) \in E_g$ such that $(a, b),\allowbreak (b, c),\allowbreak (c, d),\allowbreak (d,a) \in E_g$, i.e., they form a 4-cycle with the given edge $(a, b)$.
\end{definition}
\noindent We define the condition \textit{Four-Cycle Safety} as follows. 
\begin{definition}
\label{def:4cyclesafety}
    A data edge $(v, v')$ for a query edge $(u, u')$ is \textit{Four-Cycle Safe} when 
    \begin{itemize}
        \item $\abs{L^4_q(u, u')} \leq \abs{L^4_G(v, v')}$, and
        \item for each $(u^*, \hat{u}) \in L^4_q(u, u')$,
        there exists $(v^*,\hat{v}) \in L^4_G(v, v')$ consisting of valid vertices and edges, i.e., for each $(u^*, \hat{u})$, 
        there exists $v^*\in C(u^*\mid u',v')$ and $\hat{v} \in C(\hat{u}\mid u,v)$ such that $(v^*,\hat{v})\in E_{CS}(u^*,\hat{u})$. 
    \end{itemize}
\end{definition}

To compute Triangle Safety and Four-Cycle Safety, we index all triangles and 4-cycles in advance. 
When the numbers of triangles and 4-cycles in the data graph are substantial, indexing them can result in a significant overhead. Hence, if the numbers of triangles and 4-cycles exceed some thresholds $k_1$ and $k_2$, respectively, we disable respective safety condition to avoid excessive overhead of indexing. For the experiments, we use $k_1=k_2=10^{9}$.

\begin{figure}[t]
    \centering
    {
    \includegraphics[width=1.0\linewidth]{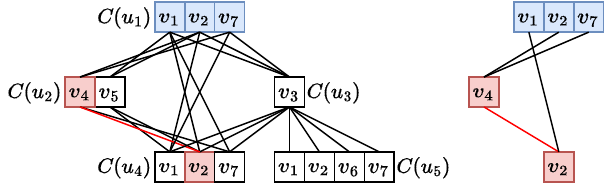}
    \subcaption{Example of Triangle Safety}
    \label{fig:triangle-safety}
    \vspace{0.6em}
    
    \includegraphics[width=1.0\linewidth]{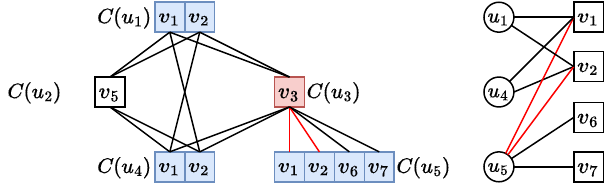}
    \subcaption{Example of Edge Bipartite Safety}
    \label{fig:edge-bipartite-safety}
    }
    
    \caption{Running Example of Candidate Filtering}
    \label{fig:filtering}
\end{figure}

\paragraph{Edge Bipartite Safety}
To develop a stronger safety condition, \GQL \cite{GraphQL} and \VC \cite{VCSubgraphMatching} propose the use of a bipartite graph $B(u, v)$ for candidate neighbors of $v \in C(u), u \in V_q$.
$B(u, v)$ consists of left vertices $V_L = N_q(u)$, and right vertices $V_R = N_G(v)$.
The edges of $B(u, v)$ connect $u'$ and $v'$ if and only if $v' \in C(u'\mid u, v)$, i.e., when a candidate edge exists between $v\in C(u)$ and $v'\in C(u')$.

\begin{example}
    \Cref{fig:edge-bipartite-safety} depicts $B(u_3, v_3)$ during the refinement of $C(u_3)$, given the query graph and the data graph in \Cref{fig:initCS} and the CS in the left-hand side.
    Neighbors $u_1, u_4, u_5$ of $u_3$ in the query graph become left vertices $V_L$ of $B(u_3, v_3)$, and neighbors $v_1, v_2, v_6, v_7$ of $v_3$ in the data graph become right vertices $V_R$.
\end{example}

It is then readily observed that for $v$ to be a valid candidate for $u$, $B(u, v)$ should have a bipartite matching whose size is equal to the set of left vertices.
We can determine whether $B(u, v)$ has such a bipartite matching in $O(d_q(u)^2 d_G(v))$ time utilizing \textsf{Ford-Fulkerson} \cite{FordFulkerson} algorithm, as there are at most $d_q(u) d_G(v)$ edges and the size of the maximum matching is at most $d_q(u)$.

We further extend the use of the bipartite graph to the fine-grained filtering of candidate edges. 
Given the bipartite graph $B(u, v)$, the existence of a maximum bipartite matching of size $d_q(u)$ does not guarantee that for every edge there exists a maximum matching containing the edge.

\begin{definition}
    An edge is called \textit{maximally matchable} if it is included in some maximum matching.
\end{definition}
When an edge $(u', v') \in E_{B(u, v)}$ is not maximally matchable, $v'$ should be removed from $C(u'\mid u,v)$. 
For bipartite graphs, it is known that finding all maximally matchable edges can be done in linear time when a maximum matching is given~\cite{MaximallyMatchableEdges}. Therefore, after we find the maximum bipartite matching on $B(u, v)$, we verify whether each edge is maximally matchable. This takes $O(d_q(u) d_G(v))$ time which is negligible compared to finding maximum bipartite matching.

\begin{example}
    In $B(u_3, v_3)$ depicted in \Cref{fig:edge-bipartite-safety}, a maximum matching $\set{(u_1, v_1), (u_4, v_2), (u_5, v_6)}$ of size 3 can be found, thus $v_3$ is a valid candidate for $u_3$. However, since there is no maximum matching including edge $(u_5, v_1)$ or $(u_5, v_2)$, these edges are not maximally matchable. Therefore $v_1$ and $v_2$ are removed from $C(u_5 \mid u_3, v_3)$.
\end{example}

\subsection{Promising-First Candidate Filtering}
\label{subsec:PFCF}

\paragraph{Refinement Priority}
In each refinement step, we select a query vertex to be refined (\Cref{alg:Refine}, \Cref{line:choosevertex}).
Let $C_t(u)$ be the candidate set of $u$ after the $t$-th refinement step finished, with $C_0(u)$ denoting the candidate set in the initial CS. 
Let $u^1, u^2, \dots, u^t$ be the sequence of vertices that have already been refined, with the current refinement step being $t+1$. 

As $u^{t+1}$, we choose the query vertex $u$ such that $u$ has the lowest $penalty_t(u)$ (i.e., most promising), defined as
\begin{equation*}
    \label{eqn:priority}
    penalty_{t}(u) = \begin{cases}
        1 & \text{ if } u = u^t\\
        penalty_{t-1}(u) \times \frac{\abs{C_{t}(u^t)}}{\abs{C_{t-1}(u^t)}}  & \text{ if } u \in N_q(u^t)\\ 
        penalty_{t-1}(u)  & \text{ otherwise.}
    \end{cases}
\end{equation*}
We assign all vertices an initial penalty of $penalty_{0}(u) = \phi$ for some constant $\phi$ in $[0, 1]$.
After refining $u^t$, $penalty_t(u^t)$ is increased to 1.
For $u \in N_q(u^t)$, we reduce $penalty_t(u)$ by multiplying $\abs{C_t(u^t)}/\abs{C_{t-1}(u^t)}$ (the ratio by which $C(u^t)$ is reduced), 
as the size of $C(u)$ is also likely to decrease when the size of $C(u^t)$ decreases. 
If $u$ is neither $u^t$ itself nor a neighbor of $u^t$, its penalty remains unchanged.
Note that the penalty always lies within the range $[0, 1]$.

\begin{example}
    For given $q$ and the initial CS (\Cref{fig:initCS}), \Cref{fig:filtering} depicts the filtering process. Each vertex in $V_q$ is given an initial penalty of $2/3$. Among the vertices, we break the ties arbitrarily. In \Cref{fig:triangle-safety}, $u_2$ is chosen for the refinement. With enhanced filtering power of Triangle Safety, $v_4$ is removed from $C(u_2)$, reducing penalties of $u_1$ and $u_4$ to $1/3$ (multiplying $1/2$, the ratio by which $C(u_2)$ is reduced) and resetting the penalty of $u_2$ to $1$. The penalties of $u_3$ and $u_5$ remain unchanged as they are not in $N_q(u_2)$. 
\end{example}

\paragraph{Stopping Criteria} We empirically observe that the efficacy of refinements stagnates after some iterations. Therefore, we define two stopping criteria for the refinement, and stop whenever either one of the two conditions triggers.
\begin{enumerate}
    \item When the minimum penalty among query vertices become higher than some predefined constant $\tau$. This explicitly guides the algorithm to stop when the efficacy of further refinements are expected to be small and not worth the overhead. 
    \item When the sum of degrees of vertices chosen to be refined, including repeated selections, becomes higher than $R \abs{E_q}$ for some predefined constant $R$. This criterion is necessary to bound the worst case complexity, and practically guides the algorithm to stop when the refinement takes too much time.
\end{enumerate}

\begin{restatable}{theorem}{ThmTotalComplexity}
\label{thm:total_complexity}
    With stated stopping criteria, various safety conditions lead to the following time complexity bounds for the filtering.
    \begin{itemize}
        \item Edge Bipartite Safety : $O(\abs{E_q}\abs{E_G}\Delta_q)$ time
        \item Triangle Safety : $O(\abs{E_q}\abs{E_G}\Delta_q \delta_G)$ time
        \item 4-Cycle Safety : $O(\abs{E_q} \abs{E_G} \Delta_q^2\Delta_G \delta_G)$ time
    \end{itemize}
\end{restatable}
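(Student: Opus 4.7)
The plan is to decompose the total cost into (i) a bound on the number of refinement iterations of \Cref{alg:Refine} and (ii) the cost of a single iteration under each safety condition. For (i), I invoke the second stopping criterion directly: the pivot sequence $u^1, u^2, \ldots$ satisfies $\sum_{t} d_q(u^t) \leq R\abs{E_q}$, a constant multiple of $\abs{E_q}$. Thus it suffices to upper-bound the cost of a single refinement of $u$ by an expression of the form $d_q(u)\cdot P(\Delta_q)\cdot Q(G)$; summing against $\sum_t d_q(u^t) \leq R\abs{E_q}$ then immediately yields the claimed bounds. Throughout, I repeatedly use the handshake identity $\sum_{v \in V_G} d_G(v) = 2\abs{E_G}$ and its substructure analogues $\sum_{(v,v')\in E_G}\abs{L^3_G(v,v')} = 3 \cdot \#\mathrm{triangles}(G)$ and $\sum_{(v,v')\in E_G}\abs{L^4_G(v,v')} = 4 \cdot \#\mathrm{4cycles}(G)$.

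For \emph{Edge Bipartite Safety}, one refinement of $u$ builds $B(u,v)$ for each $v \in C(u)$; as stated in the excerpt, Ford--Fulkerson costs $O(d_q(u)^2 d_G(v))$ and the maximally-matchable-edges postprocessing is a cheaper $O(d_q(u) d_G(v))$. Summing the dominant term over $v \in V_G$ via handshake gives $O(d_q(u)^2 \abs{E_G})$ per refinement. Using $d_q(u^t)^2 \leq \Delta_q\cdot d_q(u^t)$ and the iteration bound yields the total cost $O(\Delta_q\abs{E_G}\sum_t d_q(u^t)) = O(\abs{E_q}\abs{E_G}\Delta_q)$.

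For \emph{Triangle Safety}, one refinement of $u$ costs at most $\sum_{u'\in N_q(u)}\sum_{(v,v') \in E_{CS}(u,u')} \abs{L^3_q(u,u')}\cdot\abs{L^3_G(v,v')}$. Enlarging the inner sum to all of $E_G$ and using $\abs{L^3_q(u,u')} \leq \Delta_q$ together with the classical bound $\#\mathrm{triangles}(G) = O(\abs{E_G}\delta_G)$---proved by orienting edges in degeneracy order so that each triangle has a unique smallest vertex with at most $\delta_G$ higher-ranked neighbors---gives $O(d_q(u)\Delta_q\abs{E_G}\delta_G)$ per refinement, hence $O(\abs{E_q}\abs{E_G}\Delta_q\delta_G)$ after summing over iterations.

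For \emph{Four-Cycle Safety}, the analogous per-refinement accounting gives $O(d_q(u)\Delta_q^2 \cdot \sum_{(v,v')\in E_G}\abs{L^4_G(v,v')})$, using $\abs{L^4_q(u,u')} \leq d_q(u)d_q(u') \leq \Delta_q^2$. The main obstacle---the step I expect to be the hardest---is therefore establishing $\#\mathrm{4cycles}(G) = O(\abs{E_G}\Delta_G\delta_G)$. My plan is to orient the data graph by the degeneracy ordering, so that every vertex has out-degree at most $\delta_G$; each 4-cycle then contains at least one ``source'' vertex $v$ having two out-edges within the cycle. Enumerating a 4-cycle thus reduces to choosing $v$, an ordered pair of out-neighbors $(a,b)$ of $v$ (at most $\delta_G^2$ choices per $v$, summing via $\abs{V_G}\delta_G \leq 2\abs{E_G}$ to at most $2\abs{E_G}\delta_G$ such triples globally), and a ``sink'' $w$ completing the cycle as a common neighbor of $a$ and $b$ (at most $\min(d_G(a),d_G(b)) \leq \Delta_G$ choices). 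This gives the required $O(\abs{E_G}\Delta_G\delta_G)$ bound; plugging it back yields $O(\abs{E_q}\abs{E_G}\Delta_q^2\Delta_G\delta_G)$.
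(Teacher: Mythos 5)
Your decomposition is the same as the paper's: bound the number of refinement iterations by the second stopping criterion ($\sum_t d_q(u^t) \leq R\abs{E_q}$), bound the per-iteration cost of each safety condition, and reduce the triangle and four-cycle cases to counting lemmas stating $\#\mathrm{triangles}(G) = O(\abs{E_G}\delta_G)$ and $\#\mathrm{4cycles}(G) = O(\abs{E_G}\Delta_G\delta_G)$; the paper obtains these from the Chiba--Nishizeki bound $\sum_{(v,v')\in E_G}\min(d_G(v),d_G(v')) = O(\abs{E_G}\delta_G)$, whereas you re-derive them via a degeneracy orientation, which is an equivalent standard argument. One step in your four-cycle count is wrong as written: the inequality $\abs{V_G}\,\delta_G \leq 2\abs{E_G}$ is false in general, since the degeneracy can exceed the average degree (e.g.\ a small dense clique attached to a long path). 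The conclusion you use it for is still correct, but the summation should go through the out-degrees rather than through $\abs{V_G}$: the number of (source, ordered pair of out-neighbors) triples is $\sum_{v}\mathrm{outdeg}(v)^2 \leq \delta_G\sum_v \mathrm{outdeg}(v) = \delta_G\abs{E_G}$, since every vertex has out-degree at most $\delta_G$ and the out-degrees sum to $\abs{E_G}$. With that repair the argument is sound and yields the same bounds as the paper.
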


Empirically, we choose $\tau = 0.9$ and $R = 5$ for the experiments. We observe that most of the runs were stopped by the first criterion~(penalty).

\section{Candidate Tree Sampling}
 \label{sec:sampling}
We present a sampling algorithm to select trees within the Candidate Space. Although there have been related works employing tree sampling for similar problems such as join cardinality estimation \cite{JSub}, these algorithms may face challenges in accuracy and efficiency since the number of trees is much larger than the actual number of embeddings. 
In contrast, our algorithm greatly benefits from the highly effective Candidate Space, which enables us to filter out numerous trees that cannot be extended to embeddings.

Our algorithm strategically chooses the spanning tree of the query graph to minimize the sample space. We prove that our method is optimal under an additional assumption, and it performs remarkably well in practice even when such an assumption is not met. Furthermore, we adaptively determine the sample size during the sampling process to provide a probabilistic bound. In most cases, our algorithm achieves the bound with a reasonably small sample size, ensuring both accuracy and efficiency in practice.

\subsection{Candidate Trees}
\label{subsec:candidatetrees}
\begin{definition}[Candidate Tree]
\label{def:candidatetree}   
    Let $T$ be a subgraph of $q$ which is a tree rooted at $u_r \in V_T$. We define a \textit{Candidate Tree} for $T$ as a vertex mapping $s : V_T \to V_G$ such that
    \begin{itemize}
        \item for the root node $u_r$ of $T$, $s(u_r) \in C(u_r)$, and
        \item for a node $x$ and its parent node $p \in V_T$, $s(x) \in C(x\mid p, s(p))$.
    \end{itemize}
\end{definition}
\noindent We note that a candidate tree can be seen as a homomorphism of $T$ in the compact CS.

\begin{example}
    In \autoref{fig:tree}, for given $q$ and compact CS, we first build a spanning tree $T_q$ of $q$ by removing $(u_2, u_4)$ and $(u_3, u_4)$ edges. The mapping $\set{(u_1, v_1), (u_2, v_3), (u_3, v_5), (u_4, v_7), (u_5, v_3})$ is a candidate tree for $T_q$. Note that the candidate tree is not necessarily an injective mapping.
\end{example}

Let $T_q$ be any spanning tree of $q$. When needed, we consider $T_q$ as a directed graph with edge directions assigned from root to leaf. 
We say that a candidate tree $t$ is a \textit{counterpart} of an embedding if the vertex mapping of $t$ is the same as the vertex mapping $M$ of the embedding. Since the CS is complete, the set of candidate trees for $T_q$ contains the counterparts of all the embeddings of $q$.
Therefore, we choose a spanning tree $T_q$ and use the set of candidate trees for $T_q$ in the compact CS as the sample space $\Omega$. We develop an algorithm to perform exact counting of candidate trees using dynamic programming, and sample a candidate tree from $\Omega$ uniformly at~random.

We note that the sample space can be further reduced if the definition of candidate trees is extended to only allow injective~mappings. However, this extension is infeasible, as exact counting of isomorphic embeddings is NP-hard for tree queries \cite{SubIsoHardness}.

\begin{algorithm}[t]
    \caption{\textsc{CandidateTreeSampling(CS, $q$)}}
    \label{alg:treesampling}
    \rm\sffamily
    \KwIn{The compact CS, a query graph $q$}
    \KwOut{Estimated number of embeddings and the number of successful trials in tree sampling}
    
    $T_q$ $\gets$ GetQueryTree(CS, $q$)\;
    
    $D$ $\gets$ CountCandidateTrees(CS, $T_q$)\;
    
    \#successes $\gets$ 0\;

    \#trials $\gets$ 0\;

    \While{\rm \sffamily sampling termination condition is not reached} {
        s $\gets$ GetSampleTree(CS, $T_q, D$)\;
        \#trials $\gets$ \#trials + 1\;
    
        \If{\rm\sffamily CheckEmbedding(CS, s, $q$)}{
            \#successes $\gets$ \#successes + 1\;
        }
    }
    $u_r$ $\gets$ root of $T_q$\;
    \vspace{0.3em} 
    estimate $\gets$ $\displaystyle \mathsf{\frac{\#successes}{\#trials}} \sum_{v \in C(u_r)} D(u_r, v) $ \; 
    
     \Return{\rm \sffamily (estimate, \#successes)\;}
\end{algorithm}
\vspace{-2mm}

\subsection{Sampling Candidate Trees}
\label{subsec:TreeSampling}

\paragraph{Choosing the Spanning Tree}
It is best to choose the spanning tree $T_q$ that the number of candidate trees is minimized to reduce the size of the sample space. However, exact minimization by generating all possible spanning trees and counting the number of candidate trees is not viable as the number of spanning trees for $q$ can be superexponentially large.

Thus, instead of exact minimization, we introduce a \textit{density heuristic} to find a spanning tree with a small number of candidate trees. 
Since the number of candidate trees generally increases as the number of candidate edges increases, using query edges having dense candidate edges is undesirable.
To avoid such cases, we assign the weight of the edge $(u, u') \in E_q$ as 
\begin{equation}
    {density}(u, u') = \frac{\abs{E_{CS}(u, u')}}{\abs{C(u)}\abs{C(u')}}
\end{equation}
and find a spanning tree that minimizes the sum of the logarithm of densities by \textsf{Prim's Algorithm} \cite{Prim}, which is equivalent to minimizing the product of densities of the selected edges.

This strategy of minimizing the product of density is optimal with the following assumption:
for all $(v, v') \in C(u) \times C(u')$, the event that edge $(v, v')$ is a candidate for $(u, u')$ has the probability ${density}(u, u')$, and it is independent from the events that other vertex pairs are candidate edges.

\begin{restatable}{theorem}{ThmDensityHeuristicOptimality}
\label{thm:density_heuristic_optimality}
Under the above assumption, the expected number of candidate trees is minimized by the density heuristic.
\end{restatable}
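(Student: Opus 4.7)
The plan is to compute the expected number of candidate trees under the stated independence assumption, show that it factors into (i) a term depending only on the candidate set sizes $|C(u)|$ and (ii) the product of densities over the chosen tree edges, and then observe that only (ii) depends on the spanning tree $T_q$. Minimizing this product is exactly the task that the density heuristic solves via Prim's algorithm on edge weights $\log \text{density}(u,u')$.

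First I would fix an arbitrary spanning tree $T_q$ of $q$ rooted at some $u_r$, directed from root to leaves. By \Cref{def:candidatetree}, a candidate tree is a map $s : V_q \to V_G$ with $s(u) \in C(u)$ for every $u$ and, for each directed edge $(p,x) \in E_{T_q}$, the data pair $(s(p), s(x))$ is a candidate edge of $(p,x)$. Writing the candidate-tree count as the sum of indicators over all such $s$ and taking expectation, the independence assumption lets me factor the probability of $s$ being a candidate tree as
\begin{equation*}
\Prob{s \text{ is a candidate tree}} = \prod_{(p,x) \in E_{T_q}} \Prob{(s(p), s(x)) \in E_{CS}(p, x)} = \prod_{(p,x) \in E_{T_q}} \density(p, x).
\end{equation*}
The crucial point is that each factor on the right equals $\density(p,x)$ for every valid $s$, independent of which specific pair $(s(p), s(x)) \in C(p) \times C(x)$ is chosen; this is exactly what the hypothesis grants (the event probability is the same $\density$ for every pair in $C(u) \times C(u')$).

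Because the per-mapping probability no longer depends on $s$, summing over the $\prod_{u \in V_q} |C(u)|$ candidate mappings gives
\begin{equation*}
\Expectation{\#\text{candidate trees}} = \Bigl(\prod_{u \in V_q} |C(u)|\Bigr) \cdot \prod_{(u,u') \in E_{T_q}} \density(u,u').
\end{equation*}
The first factor is a constant independent of $T_q$ (every spanning tree uses every query vertex). Hence minimizing the expected count reduces to minimizing $\prod_{(u,u') \in E_{T_q}} \density(u,u')$, or equivalently $\sum_{(u,u') \in E_{T_q}} \log \density(u,u')$. This is a minimum spanning tree problem with edge weights $\log \density(u,u')$, and Prim's algorithm returns an exact optimum, so the density heuristic attains the minimum expectation.

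No real obstacle arises: the independence hypothesis converts the expectation into a product over tree edges in one step. The only subtlety worth stating carefully is that the probability $\density(u,u')$ is the same for every pair in $C(u) \times C(u')$, so the per-mapping factor is truly independent of $s$; without this uniformity the factorization would have to account for which pair is chosen, and the reduction to a minimum spanning tree would fail.
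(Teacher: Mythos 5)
Your proof is correct, and it reaches the paper's conclusion by a genuinely more direct route. The paper first proves, by induction on the subtree structure, a closed form for $\Expectation{D(u,v)}$ where $D$ is the dynamic-programming count from \autoref{eq:numtree-recurrence}, then sums over the root's candidates to obtain $\prod_{(u_i,u_j)\in E_{T_q}}\abs{E_{CS}(u_i,u_j)}\,/\,\prod_{u}\abs{C(u)}^{d_{T_q}(u)-1}$, and finally compares this to the product of densities. You instead write the candidate-tree count as a sum of indicator products over all $\prod_u\abs{C(u)}$ vertex mappings, factor each term by independence into $\prod_{(u,u')\in E_{T_q}}\density(u,u')$, and observe that this per-mapping probability is the same for every mapping, so the expectation is $\bigl(\prod_u\abs{C(u)}\bigr)\prod_{(u,u')}\density(u,u')$ --- which one checks is identical to the paper's expression. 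Your argument is shorter, avoids the induction, and makes explicit where the uniformity of the per-pair probability is used (the one subtlety you correctly flag); the paper's inductive route costs more work but additionally yields the per-root-candidate expectations $\Expectation{D(u,v)}$, which mirror the quantities the sampling algorithm actually computes, though that extra information is not needed for the theorem. Both proofs conclude identically: $\prod_u\abs{C(u)}$ does not depend on the spanning tree, so minimizing the product of densities (equivalently, the sum of their logarithms, via Prim's algorithm) minimizes the expected count.
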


We demonstrate in \Cref{subsec:ablationstudy} that our strategy outperforms other intuitively appealing alternatives.

\paragraph{Counting Candidate Trees}
We develop a dynamic programming algorithm to obtain the exact count of candidate trees for a given $T_q$. Let $T_u$ be a subtree of $T_q$ rooted at $u \in V_q$ and $D(u, v)$ be the number of candidate trees for $T_u$ with $u$ mapped to $v$. The total number of candidate trees for $T_u$ can be counted as $\sum_{v \in C(u)} D(u, v)$. 

For a leaf node $u \in V_q$ and $v \in C(u)$, it is clear that $D(u, v) = 1$. 
For a non-leaf node $u \in V_q$ and $v \in C(u)$, a candidate tree for $T_u$ consists of candidate trees for $T_{u_c}$ for each child $u_c$ of $u$,
and the candidate for each $u_c$ is chosen from $C(u_c\mid u,v)$. Hence, we have
\begin{equation}
\label{eq:numtree-recurrence}
    D(u, v) = \prod_{u_c \in \text{children of } u} \ \sum_{v_c \in C(u_c\mid u,v)} D(u_c, v_c)
\end{equation}
which can be computed in $O(\abs{E_q}\abs{E_G})$ time, employing a bottom-up dynamic programming approach similar to \textsf{JSub} \cite{JSub} and \DAF~\cite{DAF}.

\begin{figure}[t]
    \centering
    \begin{subfigure}[t]{0.48\linewidth}
    \centering
    \includegraphics[width=\linewidth]{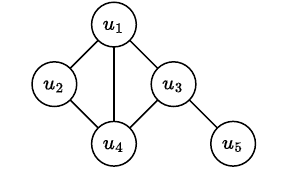}
    \subcaption{Query graph $q$}
    \label{subfig:CTSa}
    \end{subfigure}
    \begin{subfigure}[t]{0.48\linewidth}
    \centering
    \includegraphics[width=\linewidth]{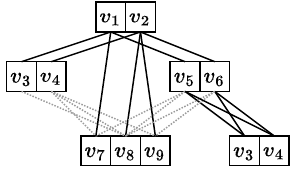}
    \subcaption{Compact CS, Spanning tree $T_q$}
    \label{subfig:CTSb}
    \end{subfigure}
    
    \vspace{0.2em}
    
    \begin{subfigure}[t]{0.48\linewidth}
    \centering
    \includegraphics[width=\linewidth]{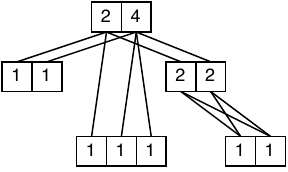}
    \subcaption{Counting Candidate Trees}
    \label{subfig:CTSc}
    \end{subfigure}
    \begin{subfigure}[t]{0.48\linewidth}
    \centering
    \includegraphics[width=\linewidth]{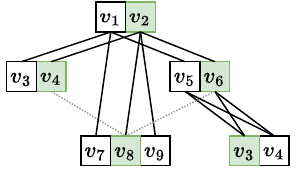}
    \subcaption{Success}
    \label{subfig:CTSd}
    \end{subfigure}
    \vspace{0.2em}
    
    \begin{subfigure}[t]{0.48\linewidth}
    \centering
    \includegraphics[width=\linewidth]{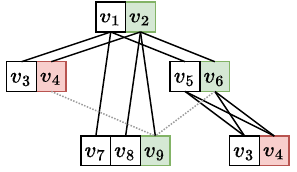}
    \subcaption{Failure (Injectivity)}
    \label{subfig:CTSe}
    \end{subfigure}
    \begin{subfigure}[t]{0.48\linewidth}
    \centering
    \includegraphics[width=\linewidth]{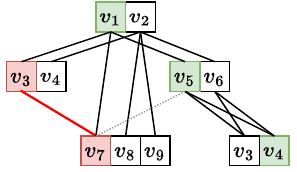}
    \subcaption{Failure (Non-tree edge)}
    \label{subfig:CTSf}
    \end{subfigure}
    \caption{New Running Example - Candidate Tree Sampling}
    \label{fig:tree}
    \vspace{-1mm}
\end{figure}

\paragraph{Uniform Sampling of Candidate Trees}
Based on the candidate tree counts computed as above, we develop a sampling algorithm that returns each candidate tree uniformly at random. Let $s$ denote a sample candidate tree for $T_q$. Recall that the candidate tree is defined as a vertex mapping.
For the root vertex $u_r$, we sample $v$ from $C(u_r)$ with weights proportional to $D(u_r, v)$. 
For $u \neq u_r$, let $u^p$ be the parent of $u$ in $T_q$.
Given $v^p = s(u^p)$, we iteratively sample $s(u)$ for $u$ with BFS traversal of $T_q$. We choose $s(u)$ from candidate neighbors $C(u\mid u^p, v^p)$ with weights proportional to $D(u, v)$. 

\noindent The detailed implementation is presented in \Cref{alg:RandomTree}.

\begin{algorithm}[t]
\rm\sffamily
    \caption{\textsc{GetSampleTree}(\textsf{CS}, $T_q$, $D$)}
    \label{alg:RandomTree}
    \KwIn{The compact \textsf{CS}, a spanning tree $T_q$ of  $q$, the computed weights $D$}
    \KwOut{A sampled candidate tree, uniformly at random}
    
    $u_r$ $\gets$ root node of $T_q$\;  
    \textsf{Sample}[$u_r$] $\gets$ draw $v$ from $C(u_r)$ at random with probability proportional to $D(u_r, v)$\;
    
    \ForEach{\rm\sffamily $u$ in BFS traversal of $T_q$} {
        let $u^p$ be the parent of $u$, and $v^p$ be Sample[$u^p$]\;
        \textsf{Sample}[$u$] $\gets$ draw $v$ from $C(u\mid u^p,v^p)$ at random with probability proportional to $D(u, v)$\;
    }
    \Return{\rm\sffamily \textsf{Sample}}
\end{algorithm}

\begin{example}
    \Cref{fig:tree} demonstrates the process of counting and sampling candidate trees. For the query graph $q$, the compact CS is built as in \Cref{subfig:CTSb}. 
    The edges $(u_2, u_4)$ and $(u_3, u_4)$ are not included to attain $T_q$ as they have larger densities of $2/3$. \Cref{subfig:CTSc} shows the value of resulting $D(u, v)$ after counting candidate trees.

    For the sampling, we start with choosing a root with weight $2 : 4$ for $v_1$ and $v_2$, respectively. For example in \Cref{subfig:CTSd}, when $v_2$ is chosen, $v_4$ and $v_6$ are the only candidate neighbors of $(u_1, v_2)$ for $u_2$ and $u_3$ respectively, thus they are sampled with probability 1.
    For $u_4$, we take a random sample from $C(u_4 \mid u_1, v_2)$. Analogously, $s(u_5)$ is chosen among $C(u_5 \mid u_3, v_6)$. As the resulting mapping in \Cref{subfig:CTSd} is injective, we check whether non-tree edges $(u_2, u_4)$ and $(u_3, u_4)$ are valid. Since $(v_4, v_8)$ and $(v_6, v_8)$ are both in the candidate edges, \Cref{subfig:CTSd} is a success.

    When $v_4$ is chosen for both $u_2$ and $u_5$ (\Cref{subfig:CTSe}), the resulting sample is an instance of failure since the mapping is not injective. If $s(u_2) = v_3$ and $s(u_4) = v_7$, the candidate for query edge $(u_2, u_4)$ is $(v_3, v_7)$. 
    However, $v_7 \not\in C(u_4\mid u_2, v_3)$, and thus the instance in \Cref{subfig:CTSf} is also a failure.
    Since one success was found among three, and there are 6 candidate trees, we return $2$ as the estimate. In reality there are two embeddings $(v_2, v_4, v_6, v_8, v_3)$ and $(v_2, v_4, v_6, v_9, v_3)$.

\end{example}

\begin{restatable}[Uniform Sampling]{theorem}{ThmTreeUniform}
\label{thm:tree_uniform}
    \Cref{alg:RandomTree} samples \textit{candidate tree} for $T_q$ uniformly at random.
\end{restatable}

\paragraph{Sample Size} 
If we perform $\#t$ trials and obtain $\#s$ successes, the sample success ratio $\hat{\rho}=\#s/\#t$ is an estimate of the true proportion $\rho$.
Let $\alpha$ denote the acceptable failure probability and $c$ denote the tolerable relative error.
To achieve the probability guarantee $\Prob{c^{-1} \rho \leq \hat{\rho} \leq c \rho} \geq 1 - \alpha$, we utilize the Clopper-Pearson interval~\cite{ClopperPearson, IntervalEstimation}, which is the interval $(L_\alpha(\#t,\#s), U_\alpha(\#t,\#s))$ such that
\begin{equation}
    \Prob{(L_\alpha(\#t,\#s) \le \rho \le U_\alpha(\#t,\#s))}\ge 1-\alpha,
\end{equation}
where $L_\alpha(\#t,\#s)$ and $U_\alpha(\#t,\#s)$ are the values that can be computed from $\alpha$, $\#t$, and $\#s$ (e.g., using the Boost library \cite{BoostLibrary}).
We adaptively determine the sample size by taking samples until $c^{-1} \hat{\rho}\le L_\alpha(\#t,\#s)$ and $U_\alpha(\#t,\#s)\le c\hat{\rho}$ are met, leading us to:
\begin{align}
\label{eq:sampleuntil}
    \Prob{c^{-1} \rho\le \hat{\rho}\le c \rho}
    &= \Prob{c^{-1} \hat{\rho}\le \rho\le c \hat{\rho}}\\
    &\ge \Prob{L_\alpha(\#t,\#s)\le \rho\le U_\alpha(\#t,\#s)}
    \ge 1-\alpha. \notag
\end{align}

In our experiments where we had 1,000 – 1,000,000 trials, we found out that, irrespective of the number of trials, 88 successes were sufficient to satisfy the condition stated with the Clopper-Pearson interval (Equation 4) for $\alpha = 0.05$ and $c = 1.25$.

For hard cases in which achieving the condition may require an unreasonable amount of computation, we terminate the tree sampling immediately and use the graph sampling described in \Cref{subsec:CGSampling}. For the experiments, we stop if there are no more than 10 successes for 50,000 trials.
Since 88 successes are required to meet the stated condition and we only continue the tree sampling if there are at least 11 successes in the first 50,000 trials, we expect that the number of trials won’t exceed 400,000 in the tree sampling.

\vspace{-0.5mm}
\begin{restatable}{theorem}{ThmTreeUnbiased}
\label{thm:tree_unbiased}
    \Cref{alg:treesampling} is an unbiased, consistent estimator for subgraph cardinality.
\end{restatable}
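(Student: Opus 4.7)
The plan is to prove unbiasedness via linearity of expectation and consistency via the strong law of large numbers, building on the uniform-sampling guarantee of \Cref{thm:tree_uniform}. The central structural step is to establish that the number of \emph{successful} candidate trees (those accepted by CheckEmbedding) is exactly $\abs{\mathcal{M}}$.

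First, I would set $N = \sum_{v \in C(u_r)} D(u_r, v)$, the total number of candidate trees for $T_q$, and exhibit a bijection between $\mathcal{M}$ and the subset of successful candidate trees in the sample space $\Omega$. Given any embedding $M : V_q \to V_G$, since $V_{T_q} = V_q$, I view $M$ itself as a mapping on $V_{T_q}$. The conditions of \Cref{def:candidatetree} follow from completeness of the CS: $M(u_r) \in C(u_r)$, and for every tree edge $(u^p, u) \in E_{T_q} \subseteq E_q$ the edge $(M(u^p), M(u)) \in E_{CS}(u^p, u)$, hence $M(u) \in C(u \mid u^p, M(u^p))$. Conversely, a candidate tree $s \in \Omega$ corresponds to an isomorphic embedding of $q$ (equivalently, passes CheckEmbedding) if and only if $s$ is injective \emph{and} every non-tree edge $(u, u') \in E_q \setminus E_{T_q}$ satisfies $(s(u), s(u')) \in E_{CS}(u, u')$. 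This identification shows that exactly $\abs{\mathcal{M}}$ of the $N$ candidate trees are successful.

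Second, by \Cref{thm:tree_uniform} each trial draws a candidate tree from $\Omega$ uniformly and independently, so each trial succeeds as an independent Bernoulli variable with parameter $p = \abs{\mathcal{M}}/N$. Writing the estimator as $\hat{X} = N \cdot (\#s / \#t)$ with $\#s \sim \mathrm{Binomial}(\#t, p)$, linearity gives
\begin{equation*}
\Expectation{\hat{X}} = N \cdot \Expectation{\#s/\#t} = N \cdot p = \abs{\mathcal{M}},
\end{equation*}
which is unbiasedness. For consistency, the strong law of large numbers implies $\#s/\#t \to p$ almost surely as $\#t \to \infty$, hence $\hat{X} \to N p = \abs{\mathcal{M}}$ almost surely.

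The only nontrivial step is the bijection argument in the second paragraph; the rest is a standard Horvitz--Thompson-style computation. The main subtlety lies in carefully invoking completeness of the filtered CS (so that every embedding survives as a candidate tree) while simultaneously verifying that CheckEmbedding correctly rejects non-injective samples and samples violating a non-tree edge. Once this combinatorial identity between successful candidate trees and embeddings is in place, both unbiasedness and consistency are immediate.
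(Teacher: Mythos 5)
Your proposal is correct and takes essentially the same route as the paper: unbiasedness follows from the uniform-sampling guarantee of \Cref{thm:tree_uniform} together with the fact that, by completeness of the CS, the successful candidate trees are exactly the counterparts of the embeddings (your bijection argument spells out what the paper leaves as ``immediate from the general sampling framework''). The only cosmetic difference is that you derive consistency from the strong law of large numbers whereas the paper uses a Chernoff bound; both yield the required convergence.
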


\section{Stratified Graph Sampling}
 
\label{subsec:CGSampling}
While the presented tree sampling algorithm gives accurate and efficient estimate for most cases, there are some hard instances where sampling space remains large compared to the number of embeddings even after filtering. 
This is due to the fact that the number of candidate trees can be asymptotically larger than $AGM(q)$, which is a tight upper bound for the number of embeddings~\cite{SSTE}.

To address such an issue, we develop a \textit{stratified graph sampling} algorithm with the following components. 
(1) We consider non-tree edges together during the sampling phase, ensuring that the sample space size is bounded by $O(AGM(q))$.
(2) We propose stratified sampling to select samples from diverse regions within the sample space. 
By combining these two components, our algorithm achieves high accuracy and efficiency, particularly on difficult instances.

\subsection{Extendable Candidates}
We define a vertex mapping $M$ as a \textit{partial embedding} when it is an embedding of an induced subgraph $q'$ of $q$ to $G$. 
Given a partial embedding $M$, a vertex $u\in V_q\setminus V_{q'}$ is an \textit{extendable vertex} if there exists $u'\in N_q(u)$ such that $u'\in V_{q'}$.
We define the set of \textit{extendable candidates} of $u$ regarding a partial embedding $M$ as follows:
\begin{align*}
    C_M(u) = \bigcap_{\set{(u', v') \in M\mid u' \in N_q(u)}} &C(u\mid u',v')-\text{image of }M.
\end{align*}
where image of $M$ is the vertices of the data graph that are mapped by $M$, i.e., $\set{v \in V_G \mid  {^\exists u} \in V_{q'} \ \text{such that}\  M(u) = v}$.
When $M = \emptyset$, $C_M(u)$ is set to be $C(u)$.
$C_M(u)$ is computed by the multi-way set intersection over candidate neighbor sets of each neighbor $u'$ of $u$. 
To maintain injectivity, candidate vertices that are already mapped by $M$ are removed from $C_M(u)$.
For $v\in C_M(u)$, $M\cup \set{(u,v)}$ forms a partial embedding since $M\cup \set{(u,v)}$ is an embedding of an induced subgraph of $q$ with vertex set $V_{q'}\cup \set{u}$ to $G$.

For example, in \Cref{subfig:CTSb}, at the time when $M = \set{(u_1, v_2), (u_2, v_4)}$, $u_4$ is an extendable vertex with extendable candidates $C_M(u_4) = \set{v_8, v_9}$ as they are the vertices in $C(u_4)$ which are candidate neighbors for both $(u_1, v_2)$ and $(u_2, v_4)$.

\subsection{Stratified Graph Sampling}
\label{subsec:graphsampling}
We say that a partial embedding \textit{extends} $M$ if it contains $M$ as a subset. Let $w_M$ denote the number of embeddings of $q$ to $G$ extending the partial embedding $M$. 
It is clear that $w_M = 1$ when $\abs{M} = \abs{V_q}$, and $w_M = 0$ when $C_M(u) = \emptyset$. For other cases, we employ the method of \textit{stratified sampling} to obtain an unbiased consistent estimator for $w_M$.

\begin{algorithm}[t]
    \caption{Stratified Graph Sampling}
    \label{alg:CGSampling}
    \rm\sffamily
    \SetKwFunction{FSample}{EstimateW}
    \SetKwFunction{FCGSample}{CandidateGraphSampling}
    \SetKwProg{Fn}{Function}{:}{}

    \KwIn{The compact CS, a query graph $q$, the number of successful tree samples \#successes}
    \KwOut{Estimated number of embeddings}
    \Fn{\FCGSample{\rm \sffamily CS, $q$, \#successes}}{
        $ub_\emptyset$ $\gets$ GetSampleSize($q$, \#successes)\; 

        (estimate, \#trials) $\gets$ EstimateW(CS, q, $\emptyset$, $ub_\emptyset$)\;
        
        \Return{\rm \sffamily estimate\;}
    }
    \vspace{0.6em} 
    \KwIn{The compact CS, a query graph $q$, a current partial embedding $M$, an upper bound for sample size $ub_M$}
    \KwOut{Estimated number of embeddings extending $M$, number of samples used}
    \Fn{\FSample{\rm \sffamily CS, $q$, $M$, $ub_M$}}{

        \lIf{$\abs{M} = \abs{V_q}$}{
            \Return{$(1, 1)$}
        }
        $u$ $\gets$ ChooseExtendableVertex($q$, $M$)\;
    
        $C_M(u)$ $\gets$ GetExtendableCandidates(CS, $q$, $u$)\;
    
        \lIf{\rm\sffamily $C_M(u)$ is empty}{
           \Return{$(0, 1)$}
        }
        sz $\gets$ $\min(C_M(u) / k, ub_M)$\; 
        \label{line:GetGroups}
        
        $S$ $\gets$ GetRandomSubset($C_M(u)$, sz)\;

        ($\hat{w}_M$, \#trials) $\gets$ $(0, 0)$\;
        \For {$i = 1 \dots \abs{S}$} {
            $M' \gets M \cup \set{(u, S_i)}$\; 
            $(\hat{w}_{M'}, n)$ $\gets$ EstimateW(CS, $q$, $M'$, $\frac{ub_M - \mathsf{\#trials}}{\abs{S} - i + 1}$)\;
            \label{line:upperbound}
            ($\hat{w}_M$, \#trials) $\gets$ ($\hat{w}_M + \hat{w}_{M'}$, \#trials + $n$)\;
        }
        $ \hat{w}_M \gets \frac{\abs{C_M(u)}}{\abs{S}}~\times \hat{w}_M$\;
        \Return{\rm \sffamily ($\hat{w}_M$, \#trials)\;}}
\end{algorithm}

\paragraph{Stratified Sampling}
\textit{Stratification} is a strategy of dividing the population into multiple subpopulations that are mutually exclusive and jointly exhaustive, namely \textit{strata}, prior to sampling in order to ensure diversity of resultant samples~\cite{Sampling}. 
In our instance, to estimate $w_M$, the population is the set of partial embeddings extending $M$. For an extendable vertex $u$ and candidates $C_M(u)$ for $M$, the population can be then stratified into subpopulations (called \textit{groups}) of partial embeddings extending $M \cup \set{(u, v)}$ for each $v \in C_M(u)$. 
Let $\hat{w}_M$ be an estimate for $w_M$. By definition,
\begin{equation}
    w_M = \sum_{v \in C_M(u)} w_{M \cup \set{(u, v)}}.
\end{equation}
Therefore, by taking a random subset $S$ of candidates from $C_M(u)$, and estimating the values of $w_{M \cup \set{(u, v)}}$ for $v \in S$, we obtain an unbiased estimator for $w_M$.
Formally, consider a random subset $S$ of $C_M(u)$, where the probability of each $v \in C_M(u)$ being included in $S$ is uniform as $\abs{S} / \abs{C_M(u)}$. We define $\hat{w}_M$ as 
\begin{equation}
\label{eq:RecursiveEstimate}
    \hat{w}_M = \frac{\abs{C_M(u)}}{\abs{S}}\sum_{v \in S} \hat{w}_{M \cup \set{(u, v)}}.
\end{equation}
If $\abs{M} = \abs{V_q}$, $\hat{w}_M$ is set to $1$, and if $C_M(u) = \emptyset$, $\hat{w}_M$ is set to $0$. Each $\hat{w}_{M \cup (u, v)}$ are computed recursively.

\noindent The pseudocode of the algorithm is presented in \Cref{alg:CGSampling}.

\paragraph{Sampling Order}
In each step, we choose an extendable vertex $u \in V_q$, and compute extendable candidates $C_M(u)$ given partial embedding $M$.  
To choose the extendable vertex with the number of extendable candidates as small as possible, we choose the vertex $u$ which is not in $M$ and has the highest number of neighbors in $M$ in each step. 
At first, we choose the vertex with smallest $\abs{C(u)}$. 
We note that for given $q$ and the compact CS, the order of extendable vertices is deterministic.

\paragraph{Adaptive Allocation}
Examining a fixed proportion of extendable candidates in each recursive step by taking $\abs{S} = \abs{C_M(u)} / k$ for some constant $k$ may result in an exponentially large number of samples.
To circumvent this issue, we limit the maximum number of samples used for estimating $w_M$ as $ub_M$ for each $M$. 
We choose the subset $S$ to be a random subset with a size of $\min(\abs{C_M(u)} / k, ub_M)$ (\Cref{alg:CGSampling}, \Cref{line:GetGroups}). 
The value of $ub_{\emptyset}$ is determined based on the difficulty of the query, with further details to be discussed later.
We then set the upper bound $ub_{M \cup \set{(u, v)}}$ for each $v \in S$ in such a way that the sum of upper bounds does not exceed $ub_M$. 

For the first $v \in S$, we assign $ub_{M \cup \set{(u, v)}}= \frac{ub_M}{\abs{S}}$. The recursive call on $M \cup \set{(u, v)}$ returns an estimate for $w_{M \cup \set{(u, v)}}$ and the actual number of samples taken throughout the recursion.
The recursion on $M$ may terminate before the number of samples reaches $ub_M$ (e.g., if $\abs{M} = \abs{V_q}$ or $C_M(u) = \emptyset$, only one sample is encountered regardless of how large $ub_M$ is).
In such cases, we adaptively increase the sample size for subsequent calls equally using this information to enhance the overall accuracy.
The upper bound for the $i$-th $v \in S$ is determined as the remaining upper bound (i.e., $ub_M$ minus the number of samples seen from the first to the $(i-1)$-th) divided by the number of remaining calls ($\abs{S} - i + 1$), as in \Cref{line:upperbound}.
As this can only increase the upper bound from $ub_M / \abs{S}$, we ensure that each $M \cup \set{(u, v)}$ receives at least $ub_M / \abs{S}$ for the upper bound, thereby obtaining diverse samples even with limited sample sizes.

\begin{figure}
    \centering
    \includegraphics[width=\linewidth]{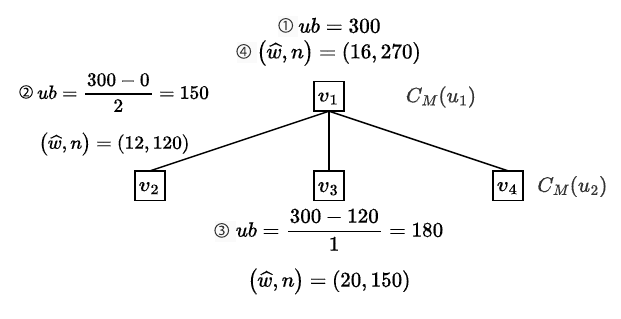}
    \caption{Example of Stratified Sampling}
    \label{fig:stratify}
\end{figure}

\begin{example}
    An example with $ub_M = 300$ is depicted in \Cref{fig:stratify}. 
    For $C_M(u_2) = \set{v_2, v_3, v_4}$, suppose $S = \set{v_2, v_3}$ is the randomly chosen subset.
    For the group with $v_2$, the upper bound $ub_{M \cup \set{(u_2, v_2)}}$ is set to 150 as $ub_M$ is expected to split into two groups. 
    Suppose that for the first group, the returned estimate is 12 with 120 samples used.
    The upper bound for the second group is then determined as 180, since only 120 samples were used. Assume that the group returned 20 with 150 samples used. Aggregating the results, the initial call on $v_1$ returns 16 (average of estimates returned by two subcalls) and reports that 270 samples are actually used.
\end{example}

\begin{restatable}{theorem}{ThmGraphUnbiased}
\label{thm:graph_unbiased}
\Cref{alg:CGSampling} is an unbiased consistent estimator for subgraph cardinality, and the worst-case time complexity is bounded by $O(AGM(q))$.
\end{restatable}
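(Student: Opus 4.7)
\textbf{Proof plan for Theorem~\ref{thm:graph_unbiased}.} My plan is to split the statement into three independent claims---unbiasedness, consistency, and the $O(\mathit{AGM}(q))$ worst-case time bound---and dispatch each in turn. The unifying device for the first two is induction on $k = |V_q| - |M|$, which measures how many vertices remain unmapped, since the recursion of \Cref{alg:CGSampling} descends exactly along this parameter.

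For \emph{unbiasedness}, I would show $\mathbb{E}[\hat w_M] = w_M$ by induction on $k$. The base cases are immediate from the algorithm: if $|M| = |V_q|$ then $\hat w_M = 1 = w_M$, and if $C_M(u) = \emptyset$ for the chosen extendable vertex $u$ then no embedding of $q$ extends $M$ so $w_M = 0 = \hat w_M$. For the inductive step, fix $M$ with $|M| < |V_q|$ and the (deterministically chosen) extendable vertex $u$. Since $S$ is drawn uniformly among size-$|S|$ subsets of $C_M(u)$, each $v \in C_M(u)$ lies in $S$ with probability $|S|/|C_M(u)|$; moreover, the subsequent recursive call is independent of which $v$'s are selected given $M\cup\{(u,v)\}$. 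Using linearity of expectation on \eqref{eq:RecursiveEstimate} and the inductive hypothesis applied at depth $k-1$,
\begin{equation*}
\mathbb{E}[\hat w_M] = \tfrac{|C_M(u)|}{|S|}\sum_{v\in C_M(u)} \Pr[v\in S]\,\mathbb{E}[\hat w_{M\cup\{(u,v)\}}] = \sum_{v\in C_M(u)} w_{M\cup\{(u,v)\}},
\end{equation*}
which equals $w_M$ by the stratification identity $w_M = \sum_{v\in C_M(u)} w_{M\cup\{(u,v)\}}$. Taking $M = \emptyset$ at the outermost call yields $\mathbb{E}[\hat w_\emptyset] = |\mathcal M|$.

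For \emph{consistency}, I would argue that as $ub_\emptyset \to \infty$ the adaptive allocation eventually forces $|S| = |C_M(u)|$ at every recursive call, at which point \eqref{eq:RecursiveEstimate} collapses to the exact identity and $\hat w_\emptyset = |\mathcal M|$ almost surely; for intermediate $ub_\emptyset$, a standard law-of-large-numbers argument at each stratum (together with the fact that each $\hat w_{M\cup\{(u,v)\}}$ is bounded and the recursion depth is fixed at $|V_q|$) gives convergence in probability. For the \emph{worst-case complexity}, the strongest bound arises under 100\% sampling, in which case the algorithm enumerates all tuples $(v_{u})_{u\in V_q}$ obtained by iteratively intersecting candidate-neighbor sets along the chosen extendable-vertex order. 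The extendable-vertex selection rule (maximize the number of already-mapped neighbors, then break ties by smallest $|C(u)|$) guarantees that every new vertex is intersected against \emph{all} of its neighbors already in $M$, so the extension step is a multi-way intersection over the $N_q(u)\cap V_{q'}$-indexed candidate-neighbor lists inside the compact CS. This is precisely the access pattern underlying worst-case optimal join algorithms such as Generic Join, whose analysis via the AGM/entropy argument bounds the total number of intermediate tuples---and hence the total intersection work after preprocessing the CS---by $O(\mathit{AGM}(q))$.

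I expect the \emph{hard part} to be the AGM-bound step: unbiasedness and consistency are essentially bookkeeping, but tying the stratified recursion to a worst-case optimal join analysis requires care. In particular, one must verify (i) that the compact CS can serve in place of the relational lists used by Generic Join with only a constant-factor overhead per intersection, (ii) that the extendable-vertex ordering induces a valid attribute ordering for the AGM analysis, and (iii) that adaptive allocation never increases the number of leaves explored beyond that of the 100\%-sampling recursion---so that any sub-100\% run is dominated by the 100\% bound. Once these points are established, the worst-case optimal join inequality yields $O(\mathit{AGM}(q))$, completing the theorem.
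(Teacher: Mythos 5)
Your proposal is correct and follows essentially the same route as the paper: unbiasedness by induction on the recursion depth using the uniform inclusion probability $|S|/|C_M(u)|$ (the paper phrases this as a double-counting over subsets plus the law of total expectation, noting the induction holds for any value of the upper bound $ub$), consistency by observing that large enough $ub_\emptyset$ collapses the recursion to exact backtracking, and the $O(AGM(q))$ bound by identifying that exhaustive run with a worst-case optimal join. The only cosmetic difference is that the paper simply cites the worst-case optimal join literature for the AGM step rather than re-verifying the points you list.
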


\paragraph{Sample Size} 
It is favorable to use more samples on difficult instances for accuracy, and use a small sample size on easier instances for efficient estimation.
However, measuring the difficulty is itself challenging as it depends on the ratio between the number of embeddings and the size of the sample space.

The query size is commonly used in prior works on graph homomorphism counting, as a larger query requires more samples in general \cite{GCare, Alley}.
Empirically, while the difficulty of an instance increases with the query size, the size of the data graph does not seem to be strongly relevant.
Since we perform the graph sampling only when the tree sampling has not found enough successes after some number of samples, the number of successes from the tree sampling can act as a proxy measure of the difficulty of an instance. 
Hence we take larger $ub_\emptyset$ when the number of successes from the tree sampling is small. 
However, this association is not linear empirically, and therefore we take the square root on the number of successes, setting $ub_\emptyset$ as 

\begin{equation}
    \label{eq:graph-sample-size}
    ub_{\emptyset} = \frac{\abs{V_q} \times K}{\sqrt{\text{\#\ successes} + 1}}
\end{equation}
for some constant $K$. Using $K = 100,000$, we achieve a reasonable accuracy over datasets of wildly different statistics.

\section{Performance Evaluation}
\label{sec:experiments}

\subsection{Experimental Setup}
We conduct experiments to evaluate the performance of our algorithm, \Fastest, by answering the following questions.
\begin{itemize}
    \item \textbf{Accuracy}. Compared to state-of-the-art competitors, how accurate is \Fastest on various real-world datasets and difficult query instances with diverse characteristics? (\Cref{subsec:accuracy})
    \item \textbf{Efficiency}. Compared to state-of-the-art competitors, how efficient is \Fastest in terms of execution time and memory usage? (\Cref{subsec:efficiency})
    \item \textbf{Evaluation of Techniques}. How do our safety conditions and tree/graph sampling algorithms contribute to accurate and efficient estimation? (\Cref{subsec:ablationstudy})
\end{itemize}
We mainly compare our algorithm with \Alley \cite{Alley}, since it has significantly outperformed the synopses-based algorithms and sampling-based algorithms such as \textsf{Characteristic Sets} \cite{CSET}, \textsf{SumRDF} \cite{SumRDF}, \textsf{Correlated Sampling} \cite{CorrelatedSampling}, \textsf{JSub} \cite{JSub}, and \textsf{WanderJoin} \cite{WanderJoin}. Since \Alley was originally developed for estimating the number of homomorphic embeddings, we modified it for isomorphic embeddings. 
We list the variants of \Alley we compared to as follows.
\begin{itemize}
\item \Alley: We modified \Alley so that for each random walk sampled, whether the walk is injective is additionally checked.
\item \AlleyPlus: We modified \Alley to not choose an already visited data vertex twice during each random walk, ensuring that only injective walks are sampled.
\item \AlleyTPI: We modified the Tangled Pattern Index (TPI) structure proposed in \cite{Alley} for isomorphism and built it on top of \AlleyPlus. 
Following \cite{Alley}, we set the maximum pattern size to 4 for vertex labeled datasets, and 5 if both vertex labels and edge labels are present.
\end{itemize}
To compare our algorithm with GNN-based methods, we consider two recent state-of-the-art works, \LSS \cite{LSS} and \NeurSC \cite{NeurSC} as two additional baselines. 

\paragraph{Environment} \Fastest is implemented in C++. The source code of \Alley and \LSS were publicly available, and \NeurSC was obtained from the authors. Hyperparameters were set to their suggested default values. 
Experiments were conducted on a machine with two Intel Xeon Silver 4114 2.20GHz CPUs, an NVIDIA RTX 3090 Ti GPU, and 256GB memory. 

\paragraph{Datasets} We conduct experiments on real-world large-scale graph datasets used in previous works \cite{RapidExperiment, DAF, VEQ, NeurSC, Alley}. 
For fair comparison, query graphs were obtained from a comparative study on subgraph matching algorithms \cite{RapidExperiment} if possible. Large query graphs for the \Dblp dataset ($\abs{V_q}=12, 20$) and query graphs for the \Yago dataset were generated via random walk. Since \Yago has edge labels, we extended our algorithm to handle edge-labeled graphs. We used an exact subgraph matching algorithm \DAF \cite{DAF} for obtaining the ground truth results. Since the computational cost for exact counting is extremely high, we only used the query graphs for which the exact count could be computed within 2 hours. 
The data graphs are listed in \autoref{tab:datastat}.

\begin{table}[t]
    \caption{Statistics of Data Graphs. $\Sigma_V$ and $\Sigma_E$ denote the sets of vertex labels and edge labels, respectively.}
    \centering
    \resizebox{\linewidth}{!}{
    \begin{tabular}{l r r r r r r}
    \toprule
    Dataset & $\abs{V}$ & $\abs{E}$ & $\abs{\Sigma_V}$& $\abs{\Sigma_E}$  & \#Query & $\abs{V_q}$\\\midrule
    Yeast (Ye)   & 3.1K       & 12.5K      & 71 & -            & 1,707   & 4 to 32       \\
    WordNet (Wo)   & 76.9K      & 120.4K     & 5  & -          & 1,164   & 4 to 20       \\
    DBLP (Db) & 317.1K & 1.0M & 15 & - & 911 & 4 to 20\\
    Youtube (Yo)    & 1.1M     & 3.0M   & 25   & -          & 922     & 4 to 32      \\
    Patents (Pa) & 3.8M & 16.5M & 20 & - & 1,485 & 4 to 32 \\ 
    YAGO (Ya) & 12.8M & 15.8M & 185K & 91 & 960 & 4 to 20 \\ 
    \bottomrule
    \end{tabular}
    }
    \label{tab:datastat}
\end{table}

\paragraph{Performance Measure} 
Accuracy is measured by \textit{q-error}~\cite{qerror}, defined as $\max\left(\frac{\max(1, w)}{\max(1, \hat{w})}, \frac{\max(1, \hat{w})}{\max(1, w)}\right)$, where $\hat{w}$ is the estimate and $w$ is the ground truth number of embeddings, as done in previous works \cite{GCare, Alley, LSS, NeurSC}.
Efficiency is measured by average elapsed time per query. For \LSS and \NeurSC, we evaluated with five-fold cross-validation.

For the tree sampling, we set the acceptable failure probability to $\alpha = 0.05$ and the tolerable relative error to $c = 1.25$. Using these parameters, the sample size for the tree sampling is adaptively determined by the Clopper-Pearson interval (\autoref{eq:sampleuntil}).

\begin{figure*}[t]
    \centering
    \begin{subfigure}{\textwidth}
        \centering
        \includegraphics[width=1\textwidth]{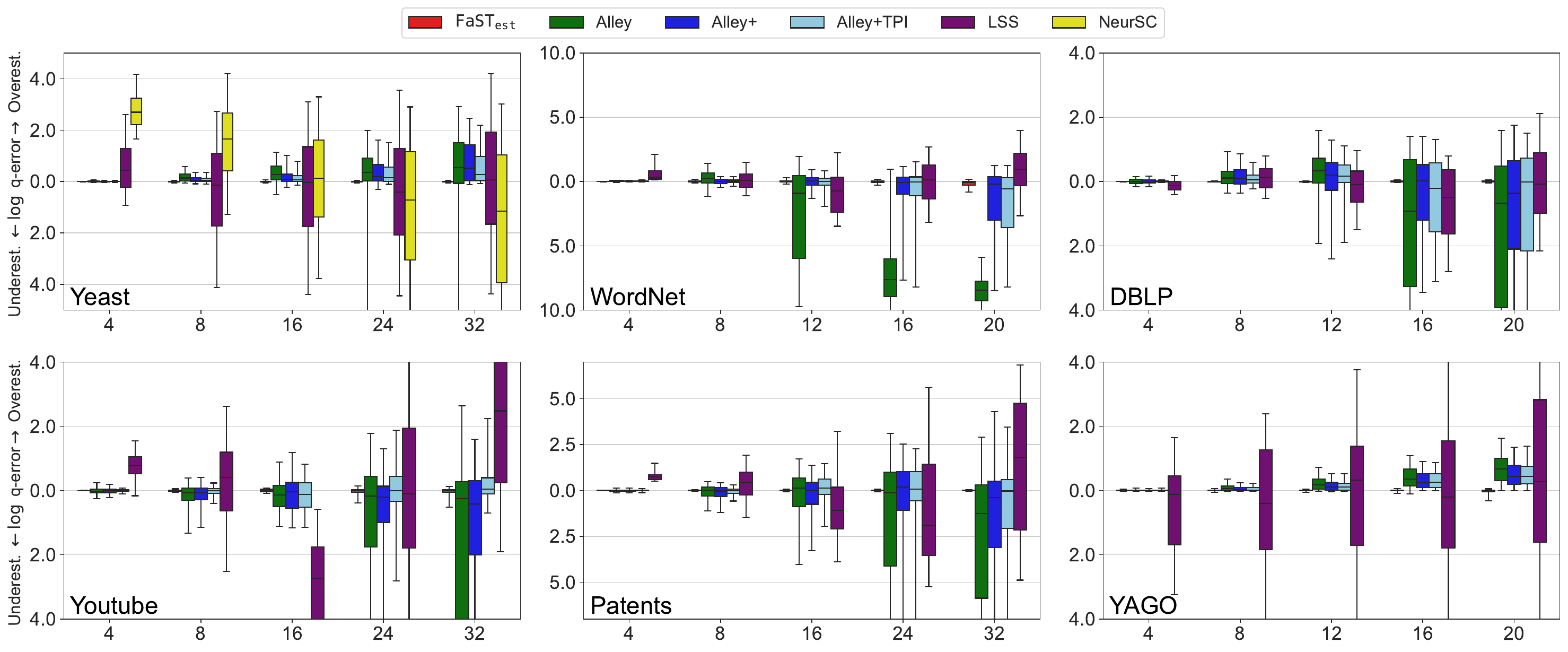}
        \subcaption{log\textsubscript{10} q-error with the x-axis representing the number of query vertices.}
        \label{fig:acc-qsize}
    \end{subfigure}
    \begin{subfigure}{\textwidth}
        \centering
        \includegraphics[width=1\textwidth]{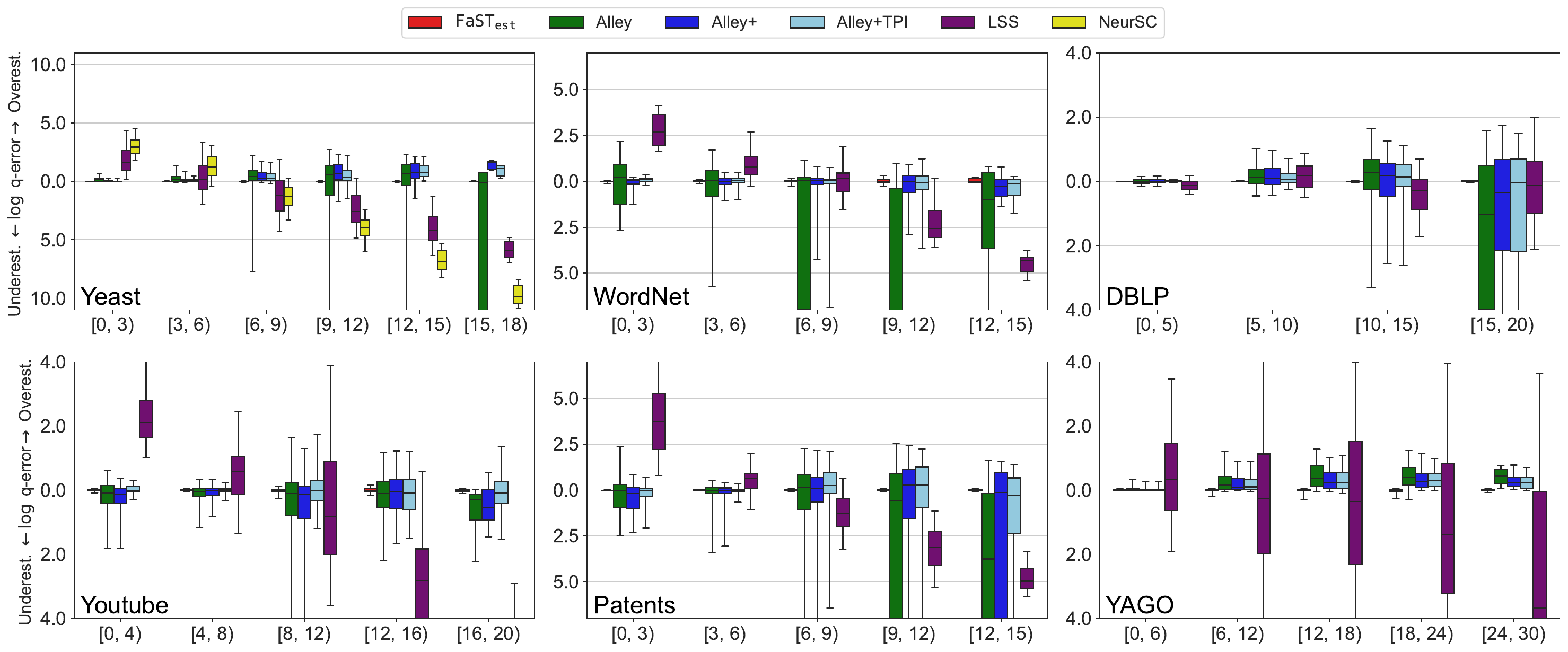}
        \subcaption{log\textsubscript{10} q-error with the x-axis representing log\textsubscript{10} (\#embeddings) of the query graph.}
        \label{fig:acc-truth}
    \end{subfigure}
    \caption{q-error evaluation on real world datasets (closer to 0 on the y-axis indicates better performance). The boxplot presents values of log q-error of a set of queries. The lower and upper whisker denotes the 5\% and 95\% quantile, respectively. The box itself represents the interquartile range, covering values between the 25\% and 75\% quantiles. The median of each set of queries is represented by the black line within the box.}
    \label{fig:accuracy}
\end{figure*}

\subsection{Accuracy}\label{subsec:accuracy}
\Cref{fig:accuracy} presents the accuracy of \Fastest compared to state-of-the-art competitors on real-world graph datasets, varying the size of the query graph (\Cref{fig:acc-qsize}) and the number of embeddings (\Cref{fig:acc-truth}).
In \Cref{fig:accuracy}, \textit{difficult} instances for subgraph cardinality estimation are large queries or queries with many embeddings.
To differentiate overestimation and underestimation, if $\hat{w} \geq w$, $\log_{10} \frac{\max(1, \hat{w})}{\max(1, w)}$ is shown above $y = \log_{10} 1 = 0$; if $\hat{w} < w$, $\log_{10} \frac{\max(1, w)}{\max(1, \hat{w})}$ is shown below $y = 0$, as in \cite{NeurSC, LSS, Alley, GCare}.
To ensure a fair comparison, we present the results with $K = 100,000$ (\autoref{eq:graph-sample-size}) for our graph sampling, and choose the sample size of \Alley so that its overall execution time is not less than ours. 
Except for \Yeast, we were unable to train \NeurSC due to time and memory limits, and therefore we omit the results.
Across data graphs with diverse statistical properties and query sizes ranging from small (4) to large (32), \Fastest significantly outperforms both sampling methods and GNN methods in terms of accuracy. Note that a log q-error of 1 represents tenfold over/underestimation.

\Cref{fig:acc-qsize} shows that the q-error of GNNs increases as the size of the query graph grows. 
For example, on \Patents-32 (queries with size 32 on the \Patents dataset), \Fastest shows an average q-error of 1.04 while \LSS shows $4328.3$ ($\log$ q-error of 0.02 and 3.64, respectively), representing an improvement of more than three orders of magnitude. Compared to \NeurSC, \Fastest shows an average q-error of 1.07 on \Yeast-32, where \NeurSC shows $811.5$ ($\log$ q-error of 0.03 and 2.91, respectively).

Additionally, GNNs tend to overestimate when the true cardinality is small, while they underestimate when the true cardinality is large, as observed in \Cref{fig:acc-truth}.
While GNN-based methods can make fast inference for each query, the accuracy on difficult instances is substantially lower than that of sampling-based algorithms. 
Therefore, for further evaluation, we focus on comparison with the state-of-the-art sampling algorithm, \Alley. 

Sampling algorithms are often prone to sampling failure, where no successful sample is found and the algorithm consequently reports zero as the estimate. \Alley also exhibits sampling failure on difficult instances, such as large queries on \Patents, \Youtube, and \WordNet datasets. However, \Fastest significantly reduces the sample space by strong filtering, thereby achieving high accuracy on such instances.
Especially, on \WordNet-20, \Fastest shows an average q-error of 1.81 (log q-error of 0.26), where \AlleyTPI exhibits an average q-error of 418.5 (log q-error of 2.62), which represents an improvement of more than two orders of magnitude.

The accuracy of \Alley tends to degrade as the true cardinality increases. Yet, as seen in \Cref{fig:acc-truth}, \Fastest demonstrates impressive accuracy even in cases where the true cardinality is exceptionally large. For the \Yeast dataset, on queries with more than $10^{15}$ embeddings (represented by $[15, 18)$), \Fastest exhibits an average q-error of $1.10$, which corresponds to a relative error of about 10\%.

\begin{figure*}[t]
    \centering
    \includegraphics[width=\linewidth]{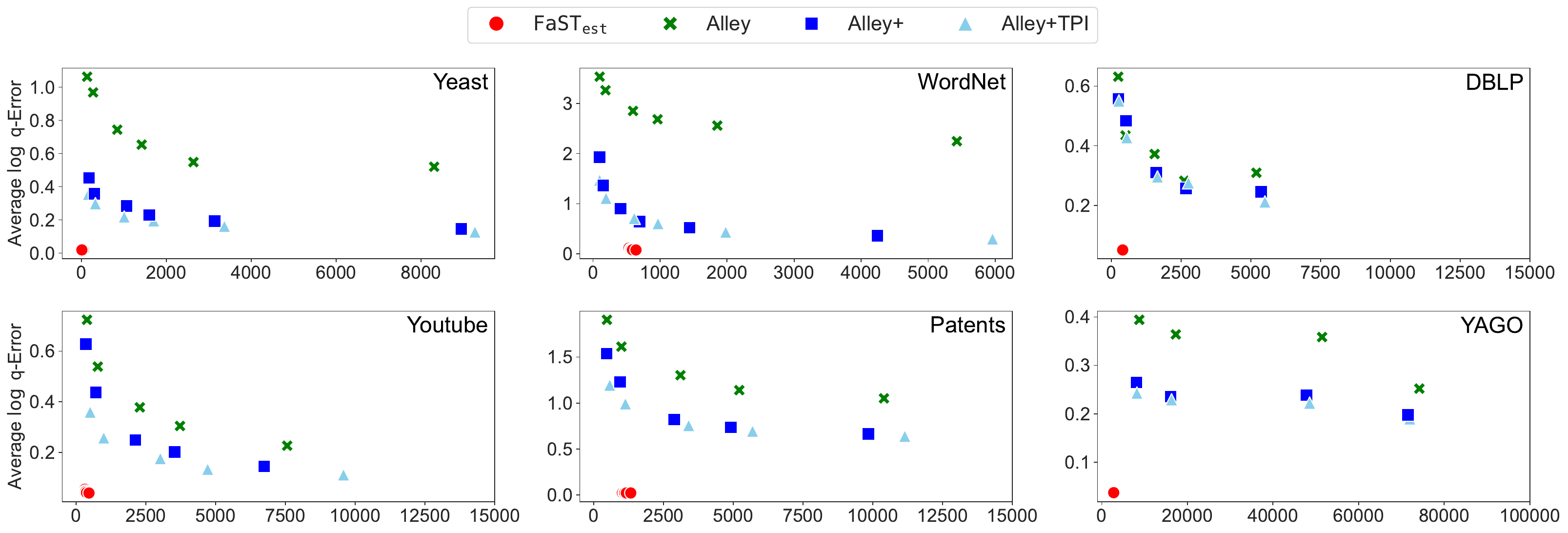}
    \captionsetup{justification=centering}
    \caption{Accuracy-Time Tradeoff with different sample sizes. (Lower = Better accuracy, Left = Faster query processing)\\The x-axis represents the average elapsed time per query (ms).}
    \label{fig:tradeoff}
\end{figure*}

\subsection{Efficiency}
\label{subsec:efficiency}
\paragraph{Query Processing Time} 
\Cref{tab:time-breakdown} presents both indexing time and query processing time of \Fastest. Specifically, the query processing time is broken down into filtering time, tree counting time, tree sampling time, and graph sampling time.
The values for Filtering, Tree Counting, Tree Sampling, and Graph Sampling (A) represent the average time spent per query (i.e., total graph sampling time divided by the number of queries).
Graph Sampling (H) shows the average graph sampling time per hard query (i.e., total graph sampling time divided by the number of hard queries), where a \textit{hard} query means a query that proceeds to the graph sampling because no more than 10 successes were observed in 50,000 trials of the tree sampling. For example, in \Yeast, 25 out of 1,707 queries are hard queries.
While the graph sampling usually takes much longer compared to the tree sampling, its overall impact on the query processing time remains limited because the majority of queries are resolved through the tree sampling.
In general, filtering takes the most time, especially for large datasets.

\setlength{\tabcolsep}{3pt}

\begin{table}[b]
    \caption{Breakdown of the running time of \Fastest (ms). Hard queries denote the queries that required graph sampling.}
    \resizebox{\linewidth}{!}{
    \centering
    \begin{tabular}{l r r r r r r}
    \toprule
    Dataset & Yeast & WordNet & DBLP & Youtube & Patents & YAGO\\\midrule
    Indexing  & 658 & 509 & 49093 & 3419 & 16066 & 13512 \\
    Query Processing & 28.06 & 555.53 & 403.94 & 382.08 & 1122.13 & 2870.08  \\\midrule
    Filtering & 4.70 & 443.96 & 400.67 & 293.83 & 980.89 & 2844.65 \\
    Tree Counting & 0.21 & 23.75 & 2.04 & 1.45 & 2.72 & 7.20 \\
    Tree Sampling & 4.47 & 42.48 & 1.23 & 21.27 & 11.97 & 6.09 \\
    Graph Sampling (A) & 18.68 & 45.54 & 0 & 65.53 & 89.55 & 12.14 \\
    Graph Sampling (H)  & 1275.47 & 84.01 & 0 & 663.94 & 929.94 & 3884.80 \\
    \# Hard Queries & 25/1707 & 631/1164 & 0/911 & 91/922 & 143/1485 & 3/960\\
    \bottomrule
    \end{tabular}
    }
    \label{tab:time-breakdown}
\end{table}

\setlength{\tabcolsep}{6pt}

\paragraph{Accuracy-Time Tradeoff}
Sampling algorithms have a natural tradeoff between accuracy and query processing time, as the estimation can be more accurate by taking a larger sample size. In \Cref{fig:tradeoff}, we demonstrate the performance of our algorithm in this tradeoff by conducting an experiment varying sample sizes.

For the graph sampling, we vary the sample size by changing value of $K$ in \autoref{eq:graph-sample-size} from $5 \times 10^3$ to $3 \times 10^5$. 
We determine the sample sizes for \Alley as $K \times \abs{V_q}$, using the same range of $K$. 
Each mark in \Cref{fig:tradeoff} represents the elapsed time and the log q-error averaged over all queries at different sample sizes.
As the sample size increases, the sampling algorithms tend to require more time (moving from left to right on the x-axis), while the average log q-error decreases, indicating improved accuracy (moving from top to bottom on the y-axis).
For large datasets, the results of \Alley variants taking an excessive amount of time were omitted (more than 15,000 ms for \Dblp, \Youtube, \Patents, and 100,000 ms for \Yago per query on average).

\Cref{fig:tradeoff} highlights the accuracy-to-time tradeoff of \Fastest when compared to \Alley.
As seen in \Cref{tab:time-breakdown}, most queries on \Yeast, \Dblp, and \Yago are handled by the tree sampling, whose sample size is determined by \autoref{eq:sampleuntil}, independent of $K$. This explains the result appearing almost as a single dot for \Yeast, \Dblp, and \Yago. In these datasets, the tree sampling demonstrates outstanding accuracy and efficiency.
For \WordNet, \Youtube, and \Patents, we observe that increasing the sample size improves the accuracy for hard queries, resulting in a reduction of the average log q-error.

Although \Fastest incurs some overhead due to filtering for large datasets, it remains significantly more efficient in terms of the required execution time to achieve a specific accuracy level.
In \WordNet, \AlleyTPI, the top competitor, takes about 6,000 milliseconds per query to attain an average log q-error of 0.3 (q-error of about 2). On the other hand, \Fastest reaches an average log q-error of 0.1 (q-error of about 1.26) with just about 600 milliseconds per query.

The efficiency of \Fastest is achieved by the synergy of our three key components. 
Firstly, our strong filtering substantially reduces the sample space before sampling, which reduces the number of required samples. Filtering also reduces the cost of obtaining a sample.
Secondly, the tree sampling generates accurate results without requiring costly intersection, addressing most cases with high efficiency.
Lastly, the graph sampling tackles the remaining hard cases.

\begin{table}[b]
    \centering
    \caption{Indexing time of Triangles, 4-Cycles and TPI (sec)}
    \resizebox{\linewidth}{!}{
    \begin{tabular}{ccccccc}
    \toprule
    &           Ye      & Wo  & Db & Yo    & Pa & Ya      \\ \toprule
    Triangles   & 0.006   & 0.04 & 1.3 & 3.4 & 16.1 & 13.5 \\
    Four-Cycles & 0.65   & 0.47  & 47.8 & - & - & -  \\ 
    TPI & 278.0    & 41.5 & 5467.1 & 6069.8 &  24837.6 & 4732.7 \\ 
    \bottomrule
    \end{tabular}
    }
    \label{tab:enumeration-overhead}
\end{table}

\paragraph{Indexing Time}
\autoref{tab:enumeration-overhead} shows the indexing time for triangles and 4-cycles for our datasets compared to the TPI (Tangled Pattern Index) from Alley \cite{Alley}.
In our experiments, all six datasets utilize Triangle Safety, and three datasets (\Yeast, \WordNet, and \Dblp) use Four-Cycle Safety.
Compared to TPI which takes exponential time, our indexing can be done much more efficiently in polynomial time. 
We note that since this indexing is required only once per data graph, it can be done in advance to process many queries.

\begin{figure}[t]
    \centering
    \includegraphics[width=\linewidth]{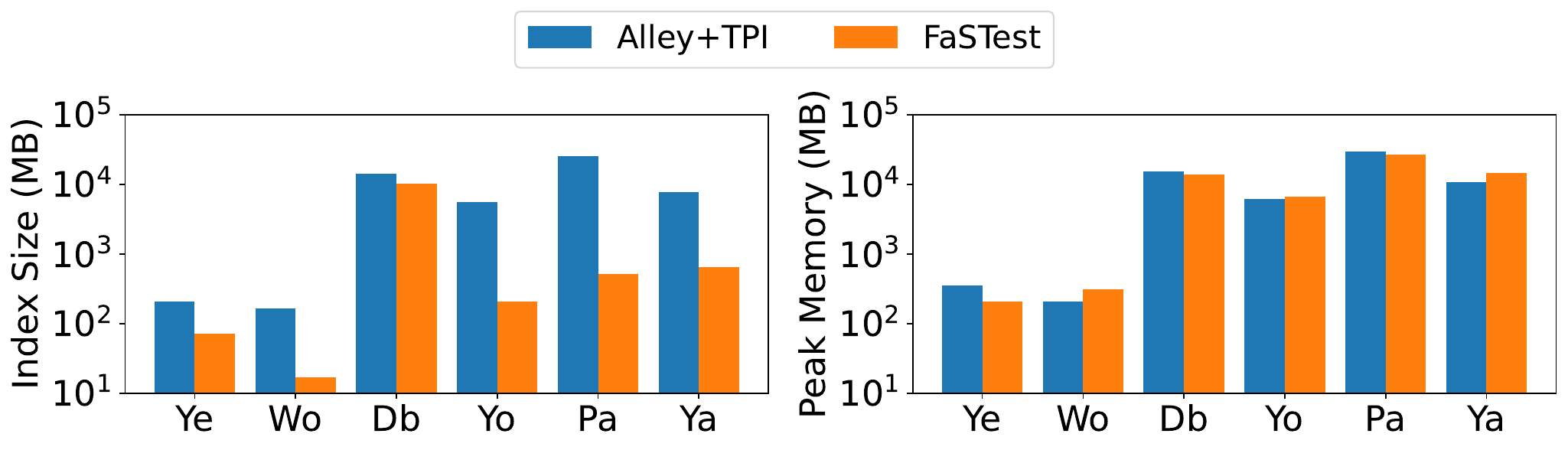}
    \caption{Index size and peak memory consumption.}
    \label{fig:memory-usage}
\end{figure}

\paragraph{Memory Usage}
\Cref{fig:memory-usage} shows the size of the index and peak memory consumption (measured as the highest memory usage observed across all queries, including both the index and the Candidate Space) of \Fastest compared to \Alley. 
While the Candidate Space is built for each query, the index is built only once per data graph.
\Fastest builds a smaller index compared to \Alley's TPI structure, and its peak memory usage remains comparable to that of \Alley.

\subsection{Evaluation of Techniques}
\label{subsec:ablationstudy}

\begin{figure}[t]
    \centering
    \includegraphics[width=\linewidth]{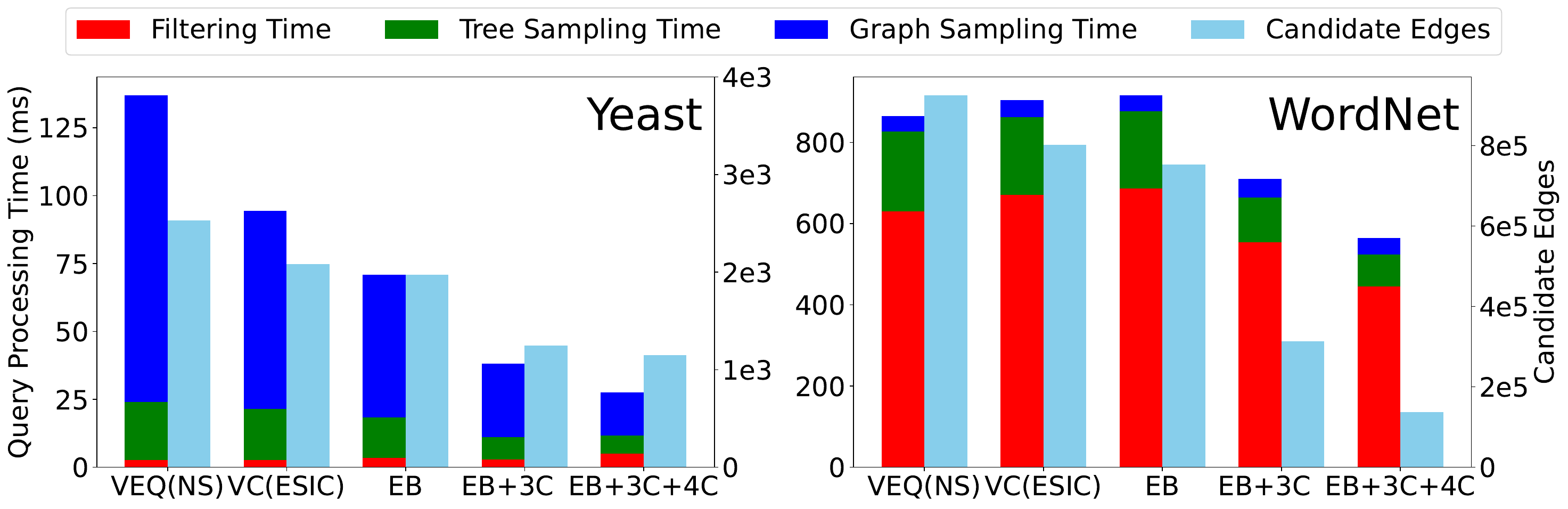}
    \caption{Query processing time and size of the Candidate Space after filtering, varying safety conditions.}
    \label{fig:ablation-revised}
\end{figure}

\paragraph{Safety Conditions}
    \Cref{fig:ablation-revised} shows the breakdown of the query processing time and the number of candidate edges after filtering for \Yeast and \WordNet, varying only the safety conditions used. The red, green, and blue portion of the left bar indicates the time spent on filtering, tree sampling, and graph sampling, respectively. The light blue bars show the number of candidate edges after filtering.

    Together with Edge Bipartite Safety (EB), Triangle Safety (3C) and Four-Cycle Safety (4C), we implemented the safety conditions from two state-of-the-art subgraph matching algorithms: Neighbor Safety (NS) proposed by \VEQ~\cite{VEQ} and Exact Star Isomorphism Constraint (ESIC) by VC~\cite{VCSubgraphMatching} as the baselines.

    On \Yeast, a stronger safety condition significantly reduces the number of candidate edges, at the expense of slight increase in the filtering time. 
    This reduction in candidate edges accelerates both tree sampling and graph sampling, thereby improving overall query processing time. 
    For \WordNet, using EB increases the filtering time compared to ESIC. 
    However, combining EB and substructure conditions (3C, 4C) reduces the filtering time as well, since the candidate edges removed by strong safety conditions are not considered in subsequent filtering steps, reducing the overall cost of filtering.

Specifically, our filtering method combining all three safety conditions reduces the number of candidate edges in \WordNet by 80\% compared to NS and by 75\% compared to ESIC.
On both datasets, each individual safety condition we propose has a clear effect in reducing query processing time and candidate edges.

\begin{table}[b]
\caption{Geometric average of number of candidate trees (Lower is better)}
\resizebox{\linewidth}{!}{
\begin{tabular}{lcccccccc}
\toprule
            & Ye             & Wo             & Db              & Yo             & Pa             & Ya              \\ 
\toprule
\#Edge MST  & 2.0e7 & 6.4e13 & 3.2e8 & 5.6e10 & 5.3e7 & 2.6e12\\
Random Walk & 7.8e6 & 4.6e12 & 3.0e8 & 8.2e9 & 4.0e7 & 1.6e11\\
Density MST & \textbf{2.5e6} & \textbf{1.0e12} & \textbf{2.7e8} & \textbf{1.4e9} & \textbf{2.6e7} & \textbf{2.9e10}  \\
\bottomrule
\end{tabular}
}
\label{tab:numtrees}
\end{table}

\paragraph{Effectiveness of Density Heuristic}
To reduce the number of candidate trees, one can consider minimizing the product of number of remaining candidate edges (\#Edge MST), while the density heuristic tries to minimize the product of \textit{density}. Additionally, we compare our strategy to the random walk-based spanning tree generation (Random Walk). It is proven that with the random walk, one can generate a random spanning tree with uniform probability~\cite{RandomWalkSpanningTree}.

\autoref{tab:numtrees} compares the geometric average of the number of candidate trees with each strategy. 
We observe that \#Edge MST generates a significantly higher number of candidate trees, while the density heuristic (Density MST) consistently produces better results.
The differences between strategies are larger when the number of candidate trees is large.
Density MST significantly reduces the sample space by up to 81.8\% compared to Random Walk and by up to 98.8\% compared to \#Edge MST (in \Yago), which directly translates into a similar reduction in the number of samples required.

\begin{figure}[t]
    \centering
    \includegraphics[width=\linewidth]{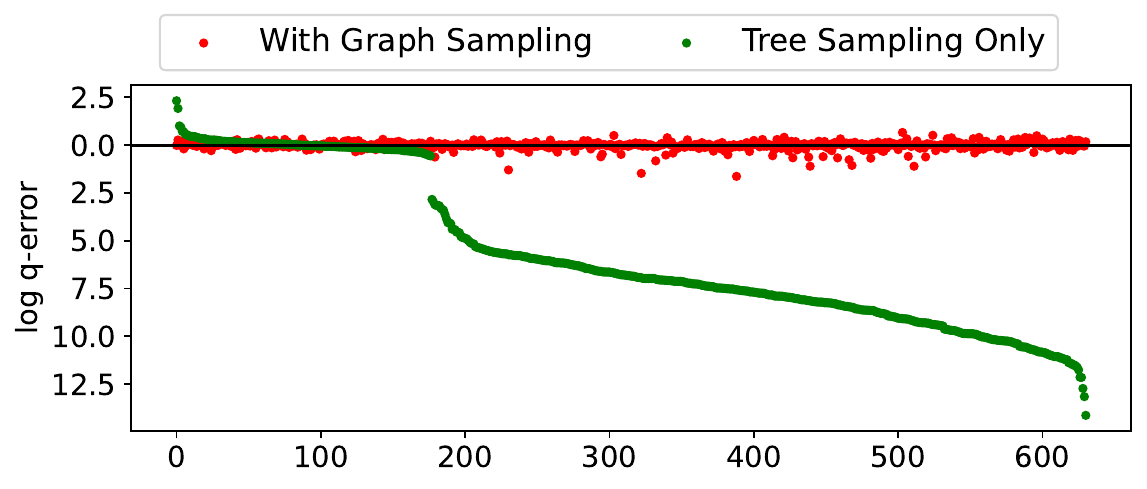}
    \caption{Comparative analysis of log q-error on \WordNet with (red) and without (green) the graph sampling. The x-axis indicates queries sorted by q-error obtained from the tree sampling. Values closer to 0 is better, with dots above (resp. below) black line representing overestimation (resp. underestimation).}
    \label{fig:only-tree}
\end{figure}

\paragraph{Comparison of Tree and Graph Sampling}
As shown in \Cref{tab:time-breakdown}, the average tree sampling time per query is significantly lower than the average graph sampling time per hard query. However, graph sampling is necessary for hard queries as tree sampling alone may not yield accurate estimates within a reasonable number of samples. 
In the extreme case, there is a query which has $1.5 \times 10^6$ embeddings in \WordNet and for which $1.2 \times 10^{26}$ candidate trees exist, necessitating about $7\times 10^{21}$ samples to get 88 successes, which is required for reaching the desired accuracy.
In the experiment, \Fastest proceeded to the graph sampling after 50,000 trials of the tree sampling as it failed to find any successes.

\autoref{fig:only-tree} shows the log q-errors with and without the graph sampling for 631 hard queries on \WordNet that required the graph sampling. The x-axis indicates queries from those where the tree sampling overestimated to those where it underestimated. Each query is represented by a red dot and a green dot, denoting the log q-errors with and without the graph sampling, respectively.
The results indicate that the tree sampling can lead to severe overestimation when one or two extremely unlikely successes occur, or severe underestimation when there are no successes. 
For these hard queries, the graph sampling provides reasonably accurate estimates. Even for the aforementioned extreme case, the graph sampling achieves a q-error of about 1.6.

\section{Conclusion}
\label{sec:conclusion}
In this paper, we have proposed \Fastest for subgraph cardinality estimation. Our novel filtering-sampling approach synergistically combines (1) a strong filtering method for drastically reducing the sample space, (2) an efficient and accurate estimation by candidate tree sampling, and (3) a worst-case optimal stratified graph sampling with outstanding accuracy on hard instances. 
Extensive experiments on real-world datasets show that \Fastest significantly outperforms state-of-the-art algorithms in terms of accuracy and time, while maintaining comparable memory usage.

\section*{Acknowledgements}
\noindent W. Shin, S. Song, and K. Park were supported by Institute of Information communications Technology Planning Evaluation (IITP) grant funded by the Korea government (MSIT) (No. 2018-0-00551, Framework of Practical Algorithms for NP-hard Graph Problems). W.-S. Han was supported by Institute of Information communications Technology Planning Evaluation (IITP) grant funded by the Korea government (MSIT) (No. 2021-0-00859, Development of a Distributed Graph DBMS for Intelligent Processing of Big Graph).

\newpage
\bibliographystyle{ACM-Reference-Format}
\bibliography{bib}
\newpage
\nobalance
\appendix
\section{Proofs and Discussions}

\label{sec:appendix}
\subsection{Proof of \Cref{thm:total_complexity}}
\begin{lemma}
\label{lem:num_cycles}
    The number of triangles and 4-cycles in a graph $g$ is at most $O(\abs{E_g}\delta_g)$ and $O(\abs{E_g} \Delta_g \delta_g)$, respectively.
\end{lemma}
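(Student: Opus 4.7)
The plan is to exploit the degeneracy ordering of $g$ guaranteed by $\delta_g$. I would reorder $V_g$ so that every vertex has at most $\delta_g$ neighbors appearing later in the ordering, and orient each edge from the earlier endpoint to the later one, writing $N_g^+(v)$ for the resulting out-neighborhood. This gives $\abs{N_g^+(v)} \le \delta_g$ for every $v$, together with $\sum_v \abs{N_g^+(v)} = \abs{E_g}$, which is exactly the setup needed to amortize counting arguments over edges while exploiting the degeneracy.

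For the triangle bound, I would uniquely charge each triangle to its minimum vertex $a$ in the ordering: the other two vertices $b,c$ must both lie in $N_g^+(a)$, so the triangle is completed by an edge inside $N_g^+(a)$. Re-indexing this as a sum over oriented edges $(a,b)$ with $a$ earlier than $b$, the remaining choice is the third vertex $c \in N_g^+(b)$, of which there are at most $\delta_g$. Summing over all edges then gives $O(\abs{E_g}\,\delta_g)$.

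For the 4-cycle bound, I would again charge each cycle $a$--$b$--$c$--$d$--$a$ to its minimum vertex $a$. Its two neighbors in the cycle, $b$ and $d$, must both lie in $N_g^+(a)$, contributing a factor $\binom{\abs{N_g^+(a)}}{2} \le \tfrac{1}{2}\,\delta_g\,\abs{N_g^+(a)}$ when summed over $a$. The "opposite" vertex $c$, however, has no forced position in the ordering, so I would bound the number of completions by $\min(d_g(b), d_g(d)) \le \Delta_g$. Multiplying these factors and then using $\sum_a \abs{N_g^+(a)} = \abs{E_g}$ yields $O(\abs{E_g}\,\Delta_g\,\delta_g)$.

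The main obstacle is the 4-cycle count: unlike triangles, the minimum-vertex argument constrains only two of the three non-minimum vertices, so the uncontrolled opposite vertex $c$ must be handled by a coarse $\Delta_g$ bound rather than another $\delta_g$. A naive count without the degeneracy orientation would only give $O(\abs{E_g}\,\Delta_g^2)$, so the whole gain comes from using $\delta_g$ to control the pair $(b,d)$ above $a$. A small secondary step is a bookkeeping check that charging each 4-cycle to its unique minimum vertex (together with the unordered pair of its two cycle-neighbors) avoids over-counting; since the minimum is unique and the pair $\{b,d\}$ is determined by the cycle, this is routine.
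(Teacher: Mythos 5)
Your argument is correct, and it takes a genuinely different route from the paper's. The paper charges each triangle (resp.\ 4-cycle) to the edges it contains, bounds the number of triangles through an edge $(v,v')$ by $\min(d_g(v),d_g(v'))$ and the number of 4-cycles through it by $d_g(v)\,d_g(v')\le \Delta_g\min(d_g(v),d_g(v'))$, and then invokes the Chiba--Nishizeki inequality $\sum_{(v,v')\in E_g}\min(d_g(v),d_g(v'))=O(\abs{E_g}\,\delta_g)$ as a black box. You instead charge each cycle to its minimum vertex in a degeneracy ordering and count directly with the out-orientation, which in effect gives a self-contained proof of that inequality rather than citing it: for triangles, the oriented edge from the minimum to the middle vertex plus a choice of $c\in N_g^+(b)$ yields $\abs{E_g}\,\delta_g$; for 4-cycles, the pair $\{b,d\}\subseteq N_g^+(a)$ plus a common neighbour of $b$ and $d$, of which there are at most $\min(d_g(b),d_g(d))\le\Delta_g$, yields $\tfrac12\abs{E_g}\,\Delta_g\,\delta_g$. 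The two proofs deliver the same bounds; yours is elementary and citation-free, the paper's is shorter on the page. In a full write-up you would only need to make explicit that in the triangle count $b$ is taken to be the earlier of the two non-minimum vertices (so that $c\in N_g^+(b)$ is legitimate and each triangle is charged exactly once), and the injectivity of the charging map for 4-cycles, which you already address.
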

\begin{proof}
    For an edge $(v, v') \in E_g$, the number of triangles containing $(v, v')$ is at most $\min(d_g(v), d_g(v'))$. Chiba and Nishizeki \cite{ChibaNishizeki} have shown that the total number of triangles is at most
    \begin{equation}
        \sum_{(v, v') \in E_g} \min(d_g(v), d_g(v')) = O(\abs{E_g} \delta_g).
    \end{equation}

    Similarly, the number of 4-cycles containing $(v, v')$ is at most $d_g(v)d_g(v')$. Since $d_g(v)d_g(v') \leq \Delta_g\min(d_g(v), d_g(v'))$, 
    \begin{align*}
        \sum_{(v, v') \in E_g} d_g(v) d_g(v') 
        &\leq \Delta_g \sum_{(v, v') \in E_g} \min(d_g(v), d_g(v')) \\
        &= O(\abs{E_g} \Delta_g \delta_g)
    \end{align*}
    Therefore the number of 4-cycles is at most $O(\abs{E_g} \Delta_g \delta_g)$.
\end{proof}
\begin{lemma}
\label{thm:time_complexity_triangle}
    The time complexity for Triangle Safety applied on each query edge $(u, u')$ is $O(\min(d_q(u), d_q(u')) \abs{E_G} \delta_G)$.
\end{lemma}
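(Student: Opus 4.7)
The plan is to bound the work the algorithm performs on each candidate edge and then sum over all such edges, using \Cref{lem:num_cycles} to control the total triangle count in the data graph. First, I would observe the trivial structural bound $\abs{L^3_q(u, u')} \le \min(d_q(u), d_q(u'))$, which is immediate from \Cref{def:localtriangles}: any $u^* \in L^3_q(u, u')$ is a common neighbor of $u$ and $u'$, so $L^3_q(u, u') \subseteq N_q(u) \cap N_q(u')$.

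Next, I would describe the per-candidate-edge cost. Since all triangles of $G$ are indexed in advance (as noted in the main text), the set $L^3_G(v, v')$ is available in $O(\abs{L^3_G(v, v')})$ time for each candidate edge $(v, v') \in E_{CS}(u, u')$. Evaluating Triangle Safety for $(v, v')$ then amounts to checking, for every $u^* \in L^3_q(u, u')$, whether some $v^* \in L^3_G(v, v')$ lies in $C(u^*\mid u,v)\cap C(u^*\mid u',v')$. Representing each $C(u^*\mid \cdot,\cdot)$ by a hash set, each membership test costs $O(1)$, so the cost of processing $(v, v')$ is at most
\begin{equation*}
O\bigl(\abs{L^3_q(u, u')} \cdot \abs{L^3_G(v, v')}\bigr) = O\bigl(\min(d_q(u), d_q(u')) \cdot \abs{L^3_G(v, v')}\bigr).
\end{equation*}

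Finally, I would sum this over all candidate edges. Since $E_{CS}(u, u') \subseteq E_G$, it suffices to bound
\begin{equation*}
\min(d_q(u), d_q(u')) \sum_{(v,v') \in E_G} \abs{L^3_G(v, v')},
\end{equation*}
and the inner sum equals three times the number of triangles in $G$, which is $O(\abs{E_G}\delta_G)$ by \Cref{lem:num_cycles}. Combining these yields the claimed $O(\min(d_q(u), d_q(u'))\abs{E_G}\delta_G)$ bound.

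The only non-routine point, and where I would be careful, is justifying that the per-edge cost really is $O(\abs{L^3_q(u,u')}\cdot\abs{L^3_G(v,v')})$ rather than something larger: this requires that $L^3_G(v,v')$ be retrievable in time proportional to its size (guaranteed by the precomputed index) and that the candidate neighbor sets support $O(1)$ membership queries. Once this is in place, the rest of the argument is a direct application of \Cref{lem:num_cycles}, so the bulk of the work sits in the clean structural inequality for $L^3_q$ and the already-established triangle count bound for the data graph.
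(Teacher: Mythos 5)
Your proposal is correct and follows essentially the same argument as the paper: bound $\abs{L^3_q(u,u')}$ by $\min(d_q(u), d_q(u'))$, and charge the per-edge work against the total number of data-graph triangles via \Cref{lem:num_cycles}; you merely swap the order of summation (per candidate edge times $\abs{L^3_q}$, rather than per $u^*$ times total triangles) and spell out the implementation assumptions (precomputed triangle index, $O(1)$ membership tests) that the paper leaves implicit.
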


\begin{proof}
    For a fixed $(u, u') \in E_q$, the number of triangles containing $(u, u')$ is at most $\min(d_q(u), d_q(u'))$. For each $u^* \in L^3_q(u, u')$, checking condition 2 in \Cref{def:trianglesafety} takes time linear to the number of all triangles in the data graph in the worst case. Therefore, the said complexity is achieved by \Cref{lem:num_cycles}.
\end{proof}
 
\begin{lemma}
\label{thm:time_complexity_four_cycle}
    The time complexity for Four-Cycle Safety applied on each query edge $(u, u')$ is $O(d_q(u) d_q(u') \abs{E_G} \Delta_G \delta_G)$.
\end{lemma}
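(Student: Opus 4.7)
The plan is to follow the template of the Triangle Safety complexity argument (\Cref{thm:time_complexity_triangle}), upgrading the two relevant counts from triangles to 4-cycles. For a fixed query edge $(u, u')$, I would factor the work as ``(number of query 4-cycles through $(u, u')$) $\times$ (aggregate cost of checking the Four-Cycle Safety condition over all candidate edges of $(u, u')$ for one such query 4-cycle)''.

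For the first factor, I would observe that every element $(c, d) \in L^4_q(u, u')$ satisfies $c \in N_q(u')$ and $d \in N_q(u)$, since $(u', c)$ and $(d, u)$ must be query edges. Hence $\abs{L^4_q(u, u')} \leq d_q(u) \cdot d_q(u')$ by a trivial count. For the second factor, I would fix one such query 4-cycle and iterate over candidate edges $(v, v') \in E_{CS}(u, u')$; the check for $(v, v')$ enumerates $L^4_G(v, v')$ and tests membership of its endpoints in the appropriate candidate-neighbor sets in $O(1)$ per data 4-cycle. Summing over $(v, v')$, the work is dominated by the total number of 4-cycles in $G$, which is $O(\abs{E_G} \Delta_G \delta_G)$ by the already-proved \Cref{lem:num_cycles}.

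Multiplying these two bounds gives $O(d_q(u) d_q(u') \abs{E_G} \Delta_G \delta_G)$, as claimed. The only real subtlety is the amortization in the second factor: one must be careful that iterating ``per candidate edge $(v, v')$ the set $L^4_G(v, v')$'' does not double-count badly, so that the aggregate is indeed bounded by the global count from \Cref{lem:num_cycles}. This is the same charging trick used implicitly in the Triangle Safety proof, so no new ideas are required and the argument essentially writes itself once the two factors above are identified.
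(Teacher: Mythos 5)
Your proposal is correct and follows exactly the paper's argument: the paper's proof is a one-line "analogous to Triangle Safety" remark that bounds $\abs{L^4_q(u,u')}$ by $d_q(u)d_q(u')$ and charges the per-query-4-cycle checking cost to the global 4-cycle count $O(\abs{E_G}\Delta_G\delta_G)$ from \Cref{lem:num_cycles}, which is precisely your two-factor decomposition. Your write-up is in fact more explicit than the paper's about why $\abs{L^4_q(u,u')} \le d_q(u)d_q(u')$ and about the amortization over candidate edges, but no new ideas are involved.
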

\begin{proof}
    The proof can be done analogously, using the fact that the number of 4-cycles containing $(u, u')$ is at most $d_q(u) d_q(u')$ and the total number of 4-cycles in the data graph is bounded by $O(\abs{E_G}\Delta_G\delta_G)$ as in \autoref{lem:num_cycles}.
\end{proof}

\ThmTotalComplexity*

\begin{proof}
    Let the sequence of query vertices chosen to be refined be $s = (u^k), u^k \in V_q$. The second stopping criterion ensures $\sum_{u \in s} d_q(u)\allowbreak \leq R\abs{E_q}$. For each $u \in V_q$, candidate set $C(u)$ is a subset of $V_G$. 
    Hence for Edge Bipartite Safety, we have:
    \begin{equation*}
        \sum_{u \in s} \sum_{v \in C(u)} d_q(u)^2 d_G(v) \leq \Delta_q \sum_{u \in s} d_q(u) \abs{E_G} \leq R\abs{E_q}\abs{E_G}\Delta_q.
    \end{equation*}
    Since $R$ is a constant, this is simplified to $O(\abs{E_q}\abs{E_G}\Delta_q)$.

    For Triangle Safety, we have:
    \begin{equation*}
        \sum_{u \in s} \sum_{u' \in N_q(u)} \min(d_q(u), d_q(u')) \abs{E_G} \delta_G = O(\abs{E_q}\abs{E_G} \Delta_q \delta_G)
    \end{equation*}
    as the total number of edges seen during the refinement is bounded by $O(\abs{E_q})$. Analysis of 4-Cycle Safety can be done analogously.
\end{proof}
\vfill{\eject}

\subsection{Proof of Theorem~\ref{thm:density_heuristic_optimality}}
\label{prf:expected_candidate_tree}
In this section, we assume that, for all $(v, v') \in C(u) \times C(u')$, the event that edge $(v, v')$ is a candidate for $(u, u')$ has the probability ${density}(u, u')$, and it is independent from the events that other vertex pairs are candidate edges.
For computing the expected number of candidate trees, we first present a lemma for the expected value of $D(u,v)$ in \autoref{eq:numtree-recurrence}. 

Given query graph $q$ and $u\in V_q$, let $T_u$ be a subtree of $T_q$ rooted at $u$, and $ch(u)$ be the set of child vertices of $u$.
\begin{lemma}\label{lem:expected_D}
Under the above assumption, the expected value of $D(u, v)$ for the spanning tree $T_q$ is:
\begin{equation} \label{eq:expected_D}
    \Expectation{D(u, v)} = \frac{\prod_{(u_i, u_j) \in E_{T_u}} \abs{E_{CS}(u_i, u_j)}}{\prod_{u'\in V_{T_u}} \abs{C(u')}^{d_{T_u}(u')-1}}\times\frac{1}{\abs{C(u)}}.
\end{equation}
\end{lemma}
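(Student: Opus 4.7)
I will proceed by induction on the size of the subtree $T_u$, using a bookkeeping identity that accounts for how the formula's exponents change when a child subtree is attached to its parent. Before inducting, I will introduce, for each query edge $(u, u')$ and each candidate vertex pair $(v, v') \in C(u) \times C(u')$, a Bernoulli indicator $X_{u,u'}(v,v') \in \{0,1\}$ for the event that $(v,v')$ is a candidate edge of $(u,u')$. By assumption these indicators are mutually independent with $\Expectation{X_{u,u'}(v,v')} = \density(u, u')$. The central structural observation is that the value of $D(u_c, v_c)$ in the recurrence \eqref{eq:numtree-recurrence} is determined solely by the indicators $X_{u',u''}$ for query edges $(u', u'') \in E_{T_{u_c}}$, and hence the random variables $D(u_{c_i}, v_{c_i})$ attached to distinct children $u_{c_1}, \ldots, u_{c_k}$ of $u$ in $T_u$ depend on disjoint families of indicators, so they are jointly independent of one another and also of each $X_{u, u_{c_i}}$.

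For the base case, $u$ is a leaf of $T_q$, so $D(u,v) = 1$ and the right-hand side of \eqref{eq:expected_D} collapses to an empty product in the numerator against $\abs{C(u)}^{d_{T_u}(u)-1} = \abs{C(u)}^{-1}$ in the denominator, multiplied by $1/\abs{C(u)}$, yielding $1$. For the inductive step, I will rewrite
\begin{equation*}
D(u,v) = \prod_{i=1}^k \sum_{v_{c_i} \in C(u_{c_i})} X_{u, u_{c_i}}(v, v_{c_i})\, D(u_{c_i}, v_{c_i})
\end{equation*}
and then push the expectation inside the product using the independence across children noted above. Within a single child's factor, $X_{u, u_{c_i}}(v, v_{c_i})$ is independent of $D(u_{c_i}, v_{c_i})$ since they depend on different edges, so linearity of expectation gives $\density(u, u_{c_i}) \sum_{v_{c_i} \in C(u_{c_i})} \Expectation{D(u_{c_i}, v_{c_i})}$. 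By the inductive hypothesis, $\Expectation{D(u_{c_i}, v_{c_i})}$ is independent of the particular $v_{c_i}$, so the inner sum contributes a factor of $\abs{C(u_{c_i})}\, \Expectation{D(u_{c_i}, v_{c_i})}$, and the $\abs{C(u_{c_i})}$ cancels with the one in $\density(u, u_{c_i})$ to yield $\abs{E_{CS}(u, u_{c_i})}/\abs{C(u)}$ multiplied by the inductive expression.

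The remaining work is purely algebraic bookkeeping: I substitute the inductive formula for each child and verify that the pieces reassemble into the claimed form for $T_u$. The edges contributed by the children's subtrees together with the $k$ new edges $(u, u_{c_i})$ exactly exhaust $E_{T_u}$, producing the numerator. For the denominator, each $\abs{C(u)}$ produced by the $k$ factors of $\density(u, u_{c_i})$ combines with the standalone $1/\abs{C(u)}$ into $\abs{C(u)}^{k} = \abs{C(u)}^{d_{T_u}(u)}$, matching $\abs{C(u)}^{d_{T_u}(u)-1}$ in the denominator plus the separate $1/\abs{C(u)}$ factor. At each child $u_{c_i}$, the inductive $1/\abs{C(u_{c_i})}$ combines with $\abs{C(u_{c_i})}^{d_{T_{u_{c_i}}}(u_{c_i})-1}$ to give $\abs{C(u_{c_i})}^{d_{T_{u_{c_i}}}(u_{c_i})} = \abs{C(u_{c_i})}^{d_{T_u}(u_{c_i})-1}$, since attaching the edge $(u, u_{c_i})$ raises $u_{c_i}$'s degree by exactly one; all deeper vertices keep the same degree in $T_u$ as in $T_{u_{c_i}}$.

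I expect the most delicate point to be this degree-accounting: the $-1$ in the exponent is precisely calibrated so that the extra $1/\abs{C(u)}$ on the outside and the extra $1/\abs{C(u_c)}$ absorbed at each child mesh correctly with the $\density$ factors produced by expectation. Verifying independence across children—ensuring that no indicator appears in two different $D(u_{c_i}, v_{c_i})$ factors—is straightforward from the tree structure of $T_u$, but worth stating explicitly since it is what permits the expectation to split as a product in the first place.
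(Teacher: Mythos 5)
Your proof is correct and follows essentially the same route as the paper's: induction on the subtree, splitting the expectation across children by independence, converting the sum over candidate neighbors into a $\density$ factor, and cancelling the $\abs{C(\cdot)}$ powers via the degree bookkeeping. Your explicit indicator-variable formulation and the disjoint-edge-family argument make the independence step (which the paper invokes in one line) more rigorous, but the substance is identical.
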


\begin{proof}
By induction.
\autoref{eq:expected_D} clearly holds when $\abs{V_{T_u}}=1$ as $\Expectation{D(u,v)}=1$. Otherwise, $u$ have child vertices $x_1, x_2,\dots,x_k$, and the expected value of $D(u,v)$ is
\begin{align}
    &\mathbb{E}\left[\prod_{x\in ch(u)} \sum_{v_c \in C(x\mid u, v)} D(x, v_c)\right]\notag\\
    &\overset{\scalebox{0.7}{(1)}}{=}\prod_{x\in ch(u)} \mathbb{E}\left[\sum_{v_c \in C(x\mid u, v)} D(x, v_c)\right]\notag\\
    &=\prod_{x\in ch(u)} \sum_{v_c \in C(x)} \Prob{v_c \in C(x \mid u, v)}\times \mathbb{E}\left[D(x, v_c)\right]\notag\\    
    &=\prod_{x\in ch(u)} \sum_{v_c \in C(x)} \Prob{(v, v_c) \in E_{CS}(u, x)}\times \mathbb{E}\left[D(x, v_c)\right]\notag\\
    &=\prod_{x\in ch(u)} \sum_{v_c \in C(x)} density(u,x)\times \mathbb{E}\left[D(x, v_c)\right]\notag\\
    &\overset{\scalebox{0.7}{(2)}}{=}\prod_{x\in ch(u)} density(u,x)\sum_{v_c \in C(x)} \frac{\prod_{(u_i, u_j) \in E_{T_x}} \abs{E_{CS}(u_i, u_j)}}{\prod_{u'\in V_{T_x}} \abs{C(u')}^{d_{T_x}(u')-1}}\times\frac{1}{\abs{C(x)}}\notag\\
    &=\prod_{x\in ch(u)} \frac{\abs{E_{CS}(u, x)}}{\abs{C(u)}\abs{C(x)}}\times \frac{\prod_{(u_i, u_j) \in E_{T_x}} \abs{E_{CS}(u_i, u_j)}}{\prod_{u'\in V_{T_x}} \abs{C(u')}^{d_{T_x}(u')-1}}\notag\\
    &=\frac{\prod_{(u_i, u_j) \in E_{T_u}} \abs{E_{CS}(u_i, u_j)}}{\prod_{u'\in V_{T_u}} \abs{C(u')}^{d_{T_u}(u')-1}}\times\frac{1}{\abs{C(u)}}.
\end{align}
where (1) holds from the independence assumption, and (2) holds from the inductive hypothesis (\autoref{eq:expected_D}).
\end{proof}

\vfill{\eject}
We can get the expected total number of candidate trees by \Cref{lem:expected_D}.

\begin{lemma}\label{lem:expected_candidate_tree}
Under the above assumption, the expected value of the number of candidate trees is:
\begin{equation} \label{eq:expected_candidate_trees}
    \frac{\prod_{(u_i, u_j) \in E_{T_q}} \abs{E_{CS}(u_i, u_j)}}{\prod_{u\in V_q} \abs{C(u)}^{d_{T_q}(u)-1}}.
\end{equation}
\end{lemma}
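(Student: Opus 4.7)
The plan is to derive this directly from \Cref{lem:expected_D} by linearity of expectation, specialized to the case where the subtree $T_u$ is the full spanning tree $T_q$. By the counting recurrence (\autoref{eq:numtree-recurrence}), the total number of candidate trees for $T_q$ equals $\sum_{v \in C(u_r)} D(u_r, v)$, where $u_r$ is the root of $T_q$. Hence the quantity we need is $\Expectation{\sum_{v \in C(u_r)} D(u_r, v)}$.

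First I would push the expectation inside the sum by linearity to obtain $\sum_{v \in C(u_r)} \Expectation{D(u_r, v)}$. Next I would substitute the closed-form from \Cref{lem:expected_D} with $u = u_r$ and $T_u = T_q$. The right-hand side of \autoref{eq:expected_D} does not depend on the particular $v \in C(u_r)$, so the sum simply contributes a factor of $\abs{C(u_r)}$. This factor cancels exactly against the trailing $1/\abs{C(u_r)}$ in \autoref{eq:expected_D}, and since $T_q$ spans $q$ we have $V_{T_q} = V_q$, yielding precisely \autoref{eq:expected_candidate_trees}.

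The proof involves essentially no obstacle, because \Cref{lem:expected_D} already carries the full inductive weight of the argument. The only point requiring care is to confirm that $d_{T_q}(\cdot)$ in the lemma statement refers to the undirected degree of a vertex within the spanning tree $T_q$, which agrees with the subtree-degree convention used in the inductive proof of \Cref{lem:expected_D} when the subtree coincides with $T_q$. Once this bookkeeping is verified, the cancellation goes through cleanly and the claim follows.
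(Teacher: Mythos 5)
Your proposal is correct and matches the paper's own proof essentially verbatim: both apply linearity of expectation to $\sum_{v\in C(u_r)} D(u_r,v)$, substitute the closed form from \Cref{lem:expected_D} with $T_u = T_q$, and cancel the resulting factor of $\abs{C(u_r)}$ against the trailing $1/\abs{C(u_r)}$. Your side remark on the degree convention is also accurate, so nothing further is needed.
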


\begin{proof}
The expected total number of candidate trees is the expectation of the sum of the number of candidate trees that the root $u_r$ is mapped to each candidate vertex $v\in C(u_r)$. 
\begin{align*}\label{eq:expected_tree}
    \mathbb{E}\left[\sum_{v\in C(u_r)} D(u_r, v)\right]
    &=\sum_{v\in C(u_r)} \frac{\prod_{(u_i, u_j) \in E_{T_q}} \abs{E_{CS}(u_i, u_j)}}{\prod_{u\in V_{T_q}} \abs{C(u)}^{d_{T_q}(u)-1}}\times\frac{1}{\abs{C(u_r)}}\notag\\
    &=\frac{\prod_{(u_i, u_j) \in E_{T_q}} \abs{E_{CS}(u_i, u_j)}}{\prod_{u\in V_q} \abs{C(u)}^{d_{T_q}(u)-1}} 
\end{align*}
\end{proof}

\ThmDensityHeuristicOptimality*

\begin{proof}
When the weight of an edge of a tree is defined as density, the product of edge weights is
\begin{equation}
\label{eq:prod_density}
    \prod_{(u_i, u_j) \in E_{T_q}}  \frac{\abs{E_{CS}(u_i, u_j)}}{\abs{C(u_i)}\abs{C(u_j)}}
    =\frac{\prod_{(u_i, u_j) \in E_{T_q}} \abs{E_{CS}(u_i, u_j)}}{\prod_{u\in V_q} \abs{C(u)}^{d_{T_q}(u)}}.
\end{equation} 
Since $\displaystyle\prod_{u\in V_q} \abs{C(u)}$ is constant for the fixed candidate space, 
the spanning tree minimizing \autoref{eq:prod_density} also minimizes \autoref{eq:expected_candidate_trees}, which is the expected number of candidate trees.
\end{proof}

\subsection{Counting Candidate Trees}
Let $T_u$ be a subtree of $T_q$ rooted at $u \in V_q$, and $D(u, v)$ be the number of candidate trees for $T_u$ with $u$ mapped to $v$.
The number of candidate trees follows the recurrence
\begin{equation}
\label{eq:numtree-recurrence-appendix}
    D(u, v) = \prod_{u_c \in \text{children of } u} \ \sum_{v_c \in C(u_c\mid u,v)} D(u_c, v_c)
\end{equation}
as stated in Section 5.2.
The recurrence can be computed by dynamic programming as in \Cref{alg:TreeWeight}.
\begin{algorithm}[t]
    \rm\sffamily
    \caption{\textsc{CountCandidateTrees}(\textsf{CS}, $q, G, T_q$)}
    \label{alg:TreeWeight}
    \KwIn{The compact \textsf{CS}, a spanning tree $T_q$ of $q$}
    
    \ForEach{\rm\sffamily $u$ in postorder traversal of $T_q$} {
        \ForEach{\rm\sffamily $v \in C(u)$} {
            $D(u, v)$ $\gets$ 1\;

            \ForEach{\rm\sffamily Child node $u_c$ of $u$ in $T_q$} {
                child\_choices $\gets$ $0$\; 
                \ForEach{\rm $v_c \in C(u_c \mid u, v)$} {
                    child\_choices $\gets$ child\_choices $+ D(u_c, v_c)$ \; 
                }
                $D(u, v)$ $\gets$ $D(u, v) \times$ child\_choices
            }
        }
    }
    \Return{D}
\end{algorithm}

\begin{restatable}{theorem}{ThmTimeComplexityTreeDP}
\label{thm:time_complexity_tree_DP}
    \Cref{alg:TreeWeight} runs in $O(\abs{E_q}\abs{E_G})$ time. 
\end{restatable}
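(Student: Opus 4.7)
The plan is to account for the total work by a double-sum argument over the edges of $T_q$ and the candidate edges in the CS. The outer postorder traversal visits every vertex $u\in V_q$ exactly once, and for each fixed $u$, the algorithm performs a nested iteration over $v\in C(u)$, the children $u_c$ of $u$ in $T_q$, and the candidate neighbors $v_c\in C(u_c\mid u,v)$, doing $O(1)$ work per innermost iteration (one multiplication and one addition, assuming arithmetic cost is $O(1)$).

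First I would swap the order of summation so that the children loop is on the outside: the total work is
\[
  \sum_{u\in V_q}\ \sum_{u_c\in\text{children}(u)}\ \sum_{v\in C(u)}\ \sum_{v_c\in C(u_c\mid u,v)} O(1).
\]
Next I would observe that by the definition of candidate neighbors, $\sum_{v\in C(u)}|C(u_c\mid u,v)|$ is exactly $|E_{CS}(u,u_c)|$, since each candidate edge between $C(u)$ and $C(u_c)$ is counted once. Hence the total work simplifies to $O\!\left(\sum_{(u,u_c)\in E_{T_q}} |E_{CS}(u,u_c)|\right)$.

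Finally I would bound this sum. Since $T_q$ is a spanning tree of $q$, $|E_{T_q}|=|V_q|-1\le |E_q|$, and each $|E_{CS}(u,u_c)|\le |E_G|$ by definition of the candidate edges (they form a subset of $E_G$). Therefore the sum is $O(|E_q|\,|E_G|)$, as claimed.

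There is no real obstacle here; the only subtlety is making sure that when we pull the children loop outside the $v\in C(u)$ loop, the accounting is tight and we do not double-count: each query-tree edge $(u,u_c)$ contributes exactly once (from the side of its parent $u$), so the sum over children of all $u$ is exactly a sum over tree edges. Implicit assumptions are that arithmetic on the $D$-values is $O(1)$ and that the CS data structure grants $O(1)$ access to the candidate-neighbor lists $C(u_c\mid u,v)$; both are standard and consistent with the rest of the paper.
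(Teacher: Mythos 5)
Your proof is correct and follows essentially the same argument as the paper: the paper's proof also observes that the innermost line executes at most once per candidate edge and that the total number of candidate edges is $O(\abs{E_q}\abs{E_G})$. Your version simply spells out the sum-swapping and the identity $\sum_{v\in C(u)}\abs{C(u_c\mid u,v)}=\abs{E_{CS}(u,u_c)}$ more explicitly.
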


\begin{proof}
    Line 7 in \Cref{alg:TreeWeight} is executed at most once for every edge in the Candidate Space, and the number of candidate edges is at most $O(\abs{E_q}\abs{E_G})$.
\end{proof}

\subsection{Proof of \Cref{thm:tree_uniform}}

\ThmTreeUniform*

\begin{proof}
    Let $S$ be a random variable denoting a candidate tree and $s$ be a sample candidate tree.
    We denote $s(u_i)$ as $s_i$, and the parent of $u_i$ as $u^{p}_{i}$ for $u_i \in V_q$ except the root $u_r$.    
    The probability of the candidate tree $s$, $\Prob{S = s}$, is written as 
    \begin{equation}
        \Prob{S = s} = \Prob{S(u_r)=s_r}\prod_{i\neq r}\Prob{S(u_i)=s_i\mid S(u^{p}_{i})=s^{p}_{i}}
    \end{equation}
    as the probability of $S(u_i)=s_i$ depends only on its parent $u^{p}_{i}$. 

    \Cref{alg:RandomTree} assigns each probability as: 
    \begin{align*}
        \Prob{S(u_r)=s_r} &= \frac{D(u_r, s_r)}{\sum_{v \in C(u_r)} D(u_r, v)}\\ 
        \Prob{S(u_i)=s_i\mid S(u^{p}_{i})=s^{p}_{i}} &= \frac{D(u_i, s_i)}{\sum_{v \in C(u_i\mid u^{p}_{i}, s^{p}_{i})} D(u_i, v)}
    \end{align*}
    Hence the probability is 
    \begin{equation*}
        \Prob{S = s} = \frac{D(u_r, s_r)}{\sum_{v \in C(u_r)} D(u_r, v)} \prod_{u_i \neq u_r} \frac{D(u_i, s_i)}{\sum_{v \in C(u_i\mid u^{p}_{i}, s^{p}_{i})} D(u_i, v)}
    \end{equation*}
    By rearranging the terms, the above probability becomes
    \begin{align*}
        &\frac{\prod_{u_i \in V_q} D(u_i, s_i) }{\sum_{v \in C(u_r)} D(u_r, v)}\times{\underbrace{\left(\prod_{u_i \neq u_r} \sum_{v \in C(u_i\mid u^{p}_{i}, s^{p}_{i})} D(u_i, v)\right)}_{(1)}}^{-1}.
    \end{align*}
    (1) can be reformulated by double counting as 
    \begin{align*}
        (1) &= \prod_{u_i \in V_q}\prod_{u_j \in ch(u_i)} \sum_{v \in C(u_j\mid u_i, s_i)} D(u_j, v).
    \end{align*}
    It is simplified by \autoref{eq:numtree-recurrence-appendix} as
    \begin{align*}
        (1) = \prod_{u_i \in V_q} D(u_i, s_i).
    \end{align*}
    Hence the given probability is 
    \begin{align}
        \Prob{S = s} = \frac{1}{\sum_{v \in C(u_r)} D(u_r, v)}
    \end{align}
    which is invariant to the choice of $s$.
\end{proof}
\vfill{\eject}
\subsection{Proof of \Cref{thm:tree_unbiased}}
\ThmTreeUnbiased*

\begin{proof}
    Unbiasedness is immediate from the general sampling framework, since candidate trees are a superset of the counterparts of embeddings and \Cref{alg:RandomTree} returns a uniform random sample (\autoref{thm:tree_uniform}). 
    
    Consistency follows from the Chernoff bound. Let $\hat{\rho}$ be the estimate of $\rho$ given by \Cref{alg:treesampling}, taking $\#t$ samples. By the Chernoff bound \cite{MeasureConcentration}, we have
    \begin{align*}
        \Prob{\frac{\abs{\hat{\rho} - \rho}}{\rho} > (c - 1)} < 2 e^{-(c - 1)^2 (\#t) \rho / 3}
    \end{align*}
    for any $c > 1$. As $\#t \to \infty$, the probability of the relative error being larger than $c-1$ converges to 0, making the estimator consistent.
\end{proof}

\subsection{Discussion on the Clopper-Pearson Interval}
There are other intervals we can use instead of the Clopper-Pearson interval such as the standard interval, Wilson score interval, and Chernoff bound \cite{IntervalEstimation, WilsonScoreInterval, Chernoff, GJSampler}.
We select the Clopper-Pearson because of the following reasons. The standard interval, which relies on approximation to the standard normal distribution, is known to perform poorly in cases where the probability of success is close to 0.
Such cases do occasionally occur when the number of candidate trees is too large when compared to the number of embeddings.
In addition, it is guaranteed that the probability that the Clopper-Pearson interval for the $1-\alpha$ confidence level covers $\rho$ is no less than $1-\alpha$ unlike the standard interval and Wilson score interval.
The Chernoff bound~\cite{Chernoff} is commonly used by other sampling algorithms, and offers the advantage of requiring a constant number of successes to satisfy $\Prob{c^{-1} \rho\le \hat{\rho}\le c \rho}\ge 1-\alpha$~\cite{GJSampler}.
However, the Clopper-Pearson interval tends to require fewer trials than the Chernoff bound.
Empirically, for $\alpha=0.05$ and $c=1.25$, we discovered that the stated condition is met with fewer than 88 successes for the Clopper-Pearson interval, compared to more than 140 successes for the Chernoff bound.

\subsection{Proof of \Cref{thm:graph_unbiased}}

\ThmGraphUnbiased*

\begin{proof}
We prove that $\hat{w}_M$ defined in \autoref{eq:RecursiveEstimate} is an unbiased estimator for $w_M$, regardless of the value $ub$ for each group by induction. It is clear that $\Expectation{\hat{w}_M} = w_M$ when $\abs{M} = \abs{V_q}$ or $C_M(u) = \emptyset$. Assume that for any $ub > 0$, $\hat{w}_{M \cup \set{(u, v)}}$ is an unbiased estimate for $w_{M \cup \set{(u, v)}}$. Note that sizes of $C_M(u)$ and $S$ are both fixed. By \autoref{eq:RecursiveEstimate}, we have 
\begin{align*}
    \Expectation{\hat{w}_M} &= \Expectation{\frac{\abs{C_M(u)}}{\abs{S}} \sum_{v \in S} \hat{w}_{M \cup (u, v)}}\\ 
    &= \frac{\abs{C_M(u)}}{\abs{S}} \Expectation{\sum_{v \in S} \hat{w}_{M \cup (u, v)}}\\
    &\overset{\scalebox{0.7}{(1)}}{=} \frac{\abs{C_M(u)}}{\abs{S}} \Expectation{\Expectation{\sum_{v \in S} \hat{w}_{M \cup (u, v)} \mid S}}\\ 
    &\overset{\scalebox{0.7}{(2)}}{=} \frac{\abs{C_M(u)}}{\abs{S}} \Expectation{\sum_{v \in S} w_{M \cup (u, v)}}\\ 
    &= \frac{\abs{C_M(u)}}{\abs{S}} \sum_{S} \Prob{S}\sum_{v \in S} w_{M \cup (u, v)}\\ 
    &= \frac{\abs{C_M(u)}}{\abs{S}} \sum_{S} \sum_{v \in S} \Prob{S} w_{M \cup (u, v)}\\ 
    &\overset{\scalebox{0.7}{(3)}}{=}  \frac{\abs{C_M(u)}}{\abs{S}} \sum_{v \in C_M(u)} \sum_{S \text{ containing } v} \Prob{S}  w_{M \cup (u, v)}\\ 
    &\overset{\scalebox{0.7}{(4)}}{=} \frac{\abs{C_M(u)}}{\abs{S}} \sum_{v \in C_M(u)} \sum_{S \text{ containing } v} \binom{\abs{C_M(u)}}{\abs{S}}^{-1}  w_{M \cup (u, v)}\\
    &\overset{\scalebox{0.7}{(5)}}{=} \frac{\abs{C_M(u)}}{\abs{S}} \sum_{v \in C_M(u)} \binom{\abs{C_M(u)}-1}{\abs{S}-1}\binom{\abs{C_M(u)}}{\abs{S}}^{-1}  w_{M \cup (u, v)}\\
    &= \sum_{v \in C_M(u)} w_{M \cup (u, v)}
\end{align*}
where (1) holds by the law of total expectation, (2) holds by the inductive hypothesis and linearity of expectation, (3) is a double counting argument, and (4) and (5) holds as $S$ is a random subset of size $\abs{S}$ of $C_M(u)$.
Therefore, by induction, $\hat{w}_{\emptyset}$ is an unbiased estimator for $w_{\emptyset}$, which is the number of embeddings.

For sufficiently large $ub_{\emptyset}$, the algorithm can use $S = C_M(u)$ to examine every $v \in C_M(u)$. This is then equivalent to a backtracking algorithm for subgraph matching, and therefore \Cref{alg:CGSampling} is consistent. Backtracking with the set intersection for computing candidates can be seen as the worst-case optimal join algorithm \cite{WCOJoin, WCOSubgraphQueryProcessing}, hence the complexity is bounded by $O(AGM(q))$. 
\end{proof}
\vfill{\eject}
\end{document}